\numberwithin{equation}{section}
\newtheorem{theorem}{Theorem}[section]
\newtheorem{lemma}[theorem]{Lemma}
\newtheorem{proposition}[theorem]{Proposition}
\newtheorem{corollary}[theorem]{Corollary}
\theoremstyle{remark}
\newtheorem{remark}[theorem]{Remark}
\newtheorem{example}[theorem]{Example}
\newtheoremstyle{rmdefinition}{}{}{\upshape}{}{\bfseries}{.}{ }{}
\theoremstyle{rmdefinition}
\newtheorem{definition}[theorem]{Definition}
\newcommand{\be}[1]{\begin{equation}\label{#1}}
\newcommand{\ee}{\end{equation}}
\newcommand{\beqa}{\begin{eqnarray}}
\newcommand{\eeqa}{\end{eqnarray}}
\newcounter{tmpc}
\newlength{\tmplenght}
\newlength{\tmplenghta}
\newlength{\tmplenghtb}
\newlength{\tmplenghtc}
\begin{document}

\title[Infinite volume limits]{Infinite volume limits in Euclidean quantum field theory via stereographic projection}

\author{Svetoslav Zahariev}
\maketitle
\vspace{-7pt}
\begin{center}\small
LaGuardia Community College of The City University of New York,\\
MEC Department, 31-10 Thomson Ave.\\
Long Island City, NY 11101, U.S.A.\\szahariev@lagcc.cuny.edu
\end{center}
\vspace{-5pt}
\begin{abstract} We present a general infinite volume limit construction of probability measures obeying the Glimm-Jaffe axioms of Euclidean quantum field theory in arbitrary space-time dimension. In particular, we obtain measures that may be interpreted as corresponding to scalar quantum fields with arbitrary bounded continuous self-interaction. It remains however an open problem whether this general construction contains non-Gaussian measures.
\end{abstract}
\let\thefootnote\relax\footnotetext{Research partially supported by CUNY CCR Grant No. 1510.}
\vspace{-7pt}
\tableofcontents
\newpage
\section{Introduction}
Establishing the existence of interacting field models satisfying the Wightman axioms \cite{SW} of relativistic quantum field theory (QFT) in four dimensional space-time has remained an important open problem in mathematical physics since the 1960s.
Many models in two and three space-time dimensions \cite{Su} have been constructed so far employing the framework of Euclidean QFT in which  Minkowski space is replaced with Euclidean space. In the mid 1970s Osterwalder and Schrader \cite{OS} discovered a set of axioms formulated on Euclidean space that is equivalent to the axioms of Wightman.

In this paper we present a general construction of  probability measures satisfying the Glimm-Jaffe axioms of Euclidean QFT as stated in \cite[Chapter 6]{GJ} (except possibly ergodicity) in arbitrary space-time dimension. More explicitly, we obtain  Euclidean invariant reflection-positive measures on the space of distributions on $\mathbb{R}^{d}$ possessing characteristic functionals that satisfy appropriate analyticity and regularity conditions.

Our method may be summarized as follows. Given a sequence of mollifiers on the standard $d$-dimensional sphere $\mathbb{S}^{d}$ and a sequence of densities defined on the space $\mathcal{D}(\mathbb{S}^{d})$ of smooth functions on $\mathbb{S}^{d}$ that satisfy simple integral bounds, we construct a sequence of measures  on $\mathcal{D}(\mathbb{S}^{d})$. We observe that one can obtain such a sequence of densities from an arbitrary bounded real continuous function (representing a self-interaction). We then transfer these measures to $\mathbb{R}^{d}$ utilizing the stereographic projection, via a scaling limit procedure analogous to that developed in \cite[Section 3]{Sc2}, and show that the transferred sequence contains a subsequence weakly convergent on distributions. Heuristically, this scaling limit may be envisaged as a process in which the radius of the sphere tends to infinity, the subgroup of the isometry group of  $\mathbb{S}^{d}$ preserving a given point becomes the rotation group of $\mathbb{R}^{d}$, and the remaining rotations of $\mathbb{S}^{d}$  are identified with translations in $\mathbb{R}^{d}$.

While one would expect that in many special cases the limit measures so obtained coincide with the free scalar field measure, we believe that our construction is sufficiently general to possibly contain non-Gaussian examples as well, including in four space-time dimensions. We also conjecture that feeding the scaling limit construction with the densities corresponding to the $P(\phi)_{2}$ model on  $\mathbb{S}^{2}$ (cf. \cite{BJM}) produces the well-known $P(\phi)_{2}$ model on $\mathbb{R}^{2}$.

A distinct feature of our method is that the ultraviolet and the infrared (infinite volume) limits  are obtained simultaneously, unlike in most of the standard constructions of low-dimensional Euclidean QFT models. It also appears that the method, due do its generality, may be applied to many non-scalar QFT interactions. In particular, it may be interesting to consider the case of quantum Yang-Mills theory as it is widely believed (cf. \cite{JW}) that one of  most formidable difficulties in establishing the existence of this model in $d=4$ is the passage to infinite volume.

The paper is organized as follows. In Section \ref{refposlimea} we discuss a notion of reflection positivity on Riemannian manifolds and describe a sufficient condition for the reflection positivity of the limit of a sequence of measures that are approximately reflection positive in a suitable sense. Section \ref{finitevolmes} is dedicated to the construction of limits of isometry invariant measures on a closed Riemannian manifold. Sequences of measures are produced from a given sequence of Gaussian measures on smooth functions and a bounded continuous interaction function; the existence of a (weak) limit point on distributions is established as a special case of an equicontinuity result that may be found in Appendix \ref{limmesdistrsec}. The reflection positivity of these limit measures (assuming the manifold is equipped with appropriate reflection positivity structure) is then proved using results from Section \ref{refposlimea}.

Section \ref{theinflimsec} contains our main construction described above, i.e. the scaling limit passage from the sphere to Euclidean space. A sequence of measures on $\mathcal{D}(\mathbb{S}^{d})$ is transferred to $\mathcal{D}(\mathbb{R}^{d})$ employing the natural unitary equivalences induced by  scaling transformations in $\mathbb{R}^{d}$ and the stereographic projection. The existence of a limit point is shown using the same equicontinuity result in Appendix \ref{limmesdistrsec} mentioned above and the Glimm-Jaffe axioms are verified.

In Appendix \ref{meahb} we briefly review for reader's convenience the basic theorems regarding probability measures on locally convex spaces that are used throughout the paper. In Appendix \ref{limmesdistrsec} we have collected several results related to equicontinuous sequences of characteristic functionals of measures and uniform integrability that are undoubtedly well-known but perhaps not as standard as the material in Appendix \ref{meahb}.

\vspace{7pt}

{\bf Acknowledgment.} The author is grateful to Wojciech Dybalski, Leonard Gross, Nikolay M. Nikolov and Yoh Tanimoto for many  helpful suggestions and stimulating discussions at various stages of this project.

\section{Reflection positivity}\label{refposlimea}
\subsection{Reflection positivity on Riemannian manifolds}\label{refposrimman}
Let $M$ be a $d$-dimensional connected Riemannian manifold. We denote by $\mathcal{D}(M)$ the space of the real-valued smooth compactly supported functions on $M$ equipped with the standard $ C^{\infty}$-topology and by $\mathcal{D}'(M)$ its topological dual equipped with the weak* topology, the space of distributions on $M$.

All probability measures on topological spaces considered in this article are {\em Radon measures}. The reader is referred to Appendix A for the necessary background on measures on locally convex spaces.

Given a measure $\mu$ on $\mathcal{D}'(M)$, we denote by $L^{p}(\mu)$ the Banach space of the {\em complex-valued} $L^{p}$ functions with respect to $\mu$. We write $L^{p}(M)$ for the  space of {\em real-valued} functions on $M$ which are $p$-summable with respect to the measure induced by the metric.

We assume that we are given a decomposition $M=M_{-}\cup M_{0} \cup M_{+}$ where $M_{0},M_{-}$ and  $ M_{+}$ are disjoint, $M_{0}$ is a closed submanifold, $M_{\pm}$ are open submanifolds, and an isometry $\Theta$ mapping diffeomorphically $M_{\pm}$ onto $M_{\mp}$. We  denote the induced unitary operator on $L^{2}(M)$ by $\Theta$ as well and for every open $U \subset M$ consider the following space of functionals on $\mathcal{D}'(M)$:
\begin{equation}\label{expalgu}\mathcal{A}_{U}=\Bigl\{ \Psi: \Psi(\phi)= \sum_{i}z_{i}e^{i\phi(f_{i})},\,  f_{i}\in \mathcal{D}(U),\, \, z_{i} \in \mathbb{C}\Bigr\}.
\end{equation}
Clearly $\mathcal{A}_{U}$ is an algebra under the operation of pointwise multiplication.

 %Given a measure $\mu$ on $\mathcal{D}'(M)$, we denote by $\mu_{\pm}$ its image on $\mathcal{D}'(M_{+}\cup M_{-})$ under the natural restriction map.
\begin{definition} (a) We say that a bounded positive operator $A$ on $L^{2}(M)$ is {\em reflection positive} with respect to
  $(M_{\pm}, \Theta)$ if $A \Theta = \Theta A$ and $\langle Af, \Theta f \rangle \geq 0$  for every $f \in L^{2}(M)$ that is supported in $M_{+}$.

(b) A functional $S$ on $\mathcal{D}(M)$ is {\em reflection positive} with respect to
  $(M_{\pm}, \Theta)$ if the matrix $S_{ij}=S(f_{i}-\Theta f_{j})$ is positive semi-definite for every finite sequence $f_{1},\ldots ,f_{k} \in \mathcal{D}(M)$ supported in $M_{+}$.

(c) A probability measure $\mu$ on $\mathcal{D}'(M)$ is {\em reflection positive}  with respect to
$(M_{\pm}, \Theta)$ if
\begin{equation}\label{refpos}
{\langle \Psi, \Theta \Psi \rangle}_{L^2(\mu)}\geq 0
\end{equation}
for all $\Psi \in \mathcal{A}_{M_{+}}$ (We denote the operator on $L^2(\mu)$ induced by $\Theta$ again by $\Theta$.)
\end{definition}

\begin{lemma}\label{refptwop}
(a) If an operator $A$ on $L^{2}(M)$ is reflection positive, the functional $S(f)=e^{-\langle Af, f \rangle}$ on $\mathcal{D}(M)$ is reflection positive as well.

(b)  A probability measure $\mu$ on $\mathcal{D}'(M)$ is reflection positive if and only if
its characteristic functional $S_{\mu}$ is reflection positive.

\end{lemma}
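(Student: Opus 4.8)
The plan is to establish (a) by an explicit computation of the Gaussian exponent followed by the Schur product theorem, and (b) by expanding the $L^{2}(\mu)$ inner product of the generators of $\mathcal{A}_{M_{+}}$.

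For part (a), I would first expand the quadratic form appearing in the exponent of $S(f_{i}-\Theta f_{j})$. Writing $G_{ij}=\langle A f_{i},\Theta f_{j}\rangle$ and using that $A$ is selfadjoint, that $A\Theta=\Theta A$, and that $\Theta$ is an orthogonal involution, the cross terms collapse and one obtains
\begin{equation}
\langle A(f_{i}-\Theta f_{j}),\, f_{i}-\Theta f_{j}\rangle=\langle A f_{i},f_{i}\rangle+\langle A f_{j},f_{j}\rangle-2G_{ij},
\end{equation}
so that $S_{ij}=d_{i}d_{j}\,e^{2G_{ij}}$ with $d_{i}=e^{-\langle A f_{i},f_{i}\rangle}>0$. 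Since conjugating a matrix by a positive diagonal matrix preserves positive semi-definiteness, it suffices to show that the matrix with entries $e^{2G_{ij}}$ is positive semi-definite. The key observation is that $G$ is symmetric (here I use $\Theta^{*}=\Theta$) and positive semi-definite: for real coefficients $c_{i}$ one has $\sum_{ij}c_{i}c_{j}G_{ij}=\langle A f,\Theta f\rangle\geq 0$ with $f=\sum_{i}c_{i}f_{i}$ supported in $M_{+}$, which is precisely the reflection positivity of $A$. I would then invoke the Schur product theorem: every Hadamard power $G^{\circ n}$ is positive semi-definite, hence so is the matrix with entries $e^{2G_{ij}}=\sum_{n\geq 0}(2^{n}/n!)\,(G^{\circ n})_{ij}$, which completes (a).

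For part (b), I would take a generic $\Psi\in\mathcal{A}_{M_{+}}$, $\Psi(\phi)=\sum_{i}z_{i}e^{i\phi(f_{i})}$ with $f_{i}\in\mathcal{D}(M_{+})$, and use that the induced operator acts by $(\Theta\Psi)(\phi)=\Psi(\Theta\phi)$, whence $(\Theta\Psi)(\phi)=\sum_{j}z_{j}e^{i\phi(\Theta f_{j})}$. Expanding the inner product gives
\begin{equation}
\langle\Psi,\Theta\Psi\rangle_{L^{2}(\mu)}=\sum_{ij}\bar z_{i}z_{j}\int_{\mathcal{D}'(M)}e^{i\phi(\Theta f_{j}-f_{i})}\,d\mu(\phi)=\sum_{ij}\bar z_{i}z_{j}\,S_{\mu}(\Theta f_{j}-f_{i}).
\end{equation}
Using the identity $S_{\mu}(-h)=\overline{S_{\mu}(h)}$, this rewrites as $\sum_{ij}\bar z_{i}z_{j}\,\overline{S_{\mu}(f_{i}-\Theta f_{j})}$, i.e. as the quadratic form of the entrywise conjugate of the matrix $S_{ij}=S_{\mu}(f_{i}-\Theta f_{j})$. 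Since a Hermitian matrix is positive semi-definite if and only if its complex conjugate is, the nonnegativity of $\langle\Psi,\Theta\Psi\rangle$ for every choice of $z_{i}$ and every family $f_{i}\in\mathcal{D}(M_{+})$ is equivalent to the positive semi-definiteness of $(S_{ij})$ for every such family, which is exactly the reflection positivity of $S_{\mu}$. Both implications then follow at once from this single identity.

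The main obstacle I anticipate lies in part (a): correctly reducing the reflection positivity of the Gaussian-type functional to an elementary matrix-positivity statement and then controlling the exponential via the Schur product theorem. The bookkeeping of adjoints, namely the commutation $A\Theta=\Theta A$ together with $\Theta^{*}=\Theta$ and the selfadjointness of $A$, is what forces $G$ to come out symmetric and positive semi-definite; without the involution property of $\Theta$ this symmetry would fail and the Hadamard-exponential argument would not apply. Part (b) is by comparison a conjugation-and-relabeling exercise once the inner product is expanded, and the only point requiring care there is the correct placement of complex conjugates dictated by the convention chosen for the induced operator $\Theta$ on $L^{2}(\mu)$.
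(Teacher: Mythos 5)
Your proposal is correct, and part (b) coincides with the paper's own argument: the paper likewise expands $\langle\Psi,\Theta\Psi\rangle_{L^{2}(\mu)}$ for $\Psi(\phi)=\sum_{i}z_{i}e^{i\phi(f_{i})}$ into the quadratic form of the matrix $\bigl(S_{\mu}(f_{i}-\Theta f_{j})\bigr)$ (its equation (\ref{rein})) and reads off both implications from that single identity; you merely place the complex conjugates differently (your inner product is conjugate-linear in the first slot, the paper's in the second), which is harmless since a matrix is positive semi-definite if and only if its entrywise conjugate is. The genuine difference is in part (a): the paper gives no proof at all, citing \cite[Theorem 6.2.2]{GJ}, whereas you reconstruct that result from scratch --- factor $S_{ij}=d_{i}d_{j}e^{2G_{ij}}$ by a positive diagonal congruence, show that $G_{ij}=\langle Af_{i},\Theta f_{j}\rangle$ is symmetric and positive semi-definite using $A=A^{*}$, $A\Theta=\Theta A$, $\Theta^{*}=\Theta$ and the reflection positivity of $A$, then apply the Schur product theorem to the entrywise exponential series. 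This is essentially the standard proof behind the cited theorem, so nothing is lost and your write-up makes the lemma self-contained. Two small remarks: your symmetry of $G$ uses that $\Theta$ is a unitary involution, a property the paper's general setup leaves implicit in the word ``reflection'' (it is imposed explicitly only later, when $\Theta$ is assumed to fix $M_{0}$ and act by hyperplane reflections on tangent spaces); and even without that symmetry your argument would survive, since the cross term can be written as $G_{ij}+G_{ji}$ and the Hadamard-series argument applied to the symmetrized matrix, which is positive semi-definite for the same reason.
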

\begin{proof} (a)  See \cite[Theorem 6.2.2]{GJ}.

(b) Suppose first that  $\mu$ is reflection positive. We fix $f_{1},\ldots ,f_{k} \in \mathcal{D}(M_{+})$ and observe that
$$S_{ij}=S_{\mu}(f_{i}-\Theta f_{j})= \int_{\mathcal{D}'(M)}e^{ \phi(f_{i}-\Theta f_{j})} d\mu(\phi).$$
 Hence for every finite sequence
$z_{1},\ldots ,z_{k} \in \mathbb{C}$ one has
\begin{equation}\label{rein}
\sum_{i,j}z_{i}S_{ij}\bar{z}_{j}= \int_{\mathcal{D}'(M)}(\sum_{i}z_{i}e^{i\phi(f_{i})} )
\overline{(\Theta \sum_{i}z_{i} e^{i\phi(f_{i})})} d \mu(\phi)\geq 0
\end{equation}
which implies that $S_{\mu}$ reflection positive.

Conversely, suppose that $S_{\mu}$ is reflection positive. Then
Eq. (\ref{rein}) implies that  $\mu$ is reflection positive.
\end{proof}

\subsection{Reflection positive limits of measures}\label{reposlimimesec}

The main result in this subsection is a sufficient condition for the reflection positivity of limits of sequences of measures. For every $\tau > 0$ we denote by $M_{\pm}^{\tau}$ the set of all points in $M_{\pm}$ whose distance from $M_{0}$ is greater than $\tau$ and observe that the isometry $\Theta$ maps diffeomorphically $M^{\tau}_{\pm}$ onto $M^{\tau}_{\mp}$ so that one can define reflection positivity with respect to the data $(M^{\tau}_{\pm}, \Theta)$.

Let $\tau_{k} > 0$ be a decreasing sequence converging to 0 as $k \rightarrow \infty$ and let $\mu_{k}$ be a sequence of probability measures on $\mathcal{D}(M)$ reflection positive with respect to $(M^{\tau_{k}}_{\pm}, \Theta)$. Then the following continuity property of reflection positivity holds.

\begin{lemma}\label{contrepo}  Assume that the characteristic functionals $S_{\mu_{k}}$ of $\mu_{k}$ converge pointwise to the characteristic functional of a measure $\mu_{0}$. Then $\mu_{0}$ is reflection positive with respect to $(M_{\pm}, \Theta)$.
\end{lemma}
\begin{proof} By Lemma \ref{refptwop} (b) the characteristic functionals $S_{\mu_{k}}$ are reflection positive with respect to $(M_{\pm}^{\tau_{k}},\Theta)$, i.e. the matrices with entries
$$S^{\,k}_{ij}=S_{\mu_{k}}(f_{i}-\Theta f_{j})$$
are positive semi-definite for every finite sequence $f_{1},\ldots, f_{n} \in \mathcal{D}(M_{+}^{\tau_{k}})$. Now given a sequence $f_{1},\ldots, f_{n} $ supported in $M_{+}$, one can find sufficiently large $N$ so that $f_{1},\ldots, f_{n}$ are supported in $M_{+}^{\tau_{k}}$ for all $k >N$, which implies that $\mu_{0}$ is reflection positive.
\end{proof}

\begin{remark}\label{remonre}
Let $U$ be an open subset of $M$ and denote by $r_{U}: \mathcal{D}(M) \rightarrow C^{\infty}(U)$  the natural restriction map.  Given a probability measure $\mu$ on $\mathcal{D}(M)$, we define $\mathcal{R}_{\mu}(U)$ to be the set consisting of those functionals $\Psi \in  L^{2}(\mu)$ for which there exists  a functional $\overline{\Psi}\in L^{2}(C^{\infty}(U), r_{U}^{\bullet}\mu)$ such that
$\Psi=\overline{\Psi}r_{U}$ (here $r_{U}^{\bullet}\mu$ the denotes the image of $\mu$ under $r_{U}$.)

Then  a probability measure $\mu$ on $\mathcal{D}(M)$ is reflection positive with respect to $(M^{\tau}_{\pm}, \Theta)$ if and only if (\ref{refpos}) holds for every $\Psi \in \mathcal{R}_{\mu}(M^{\tau}_{+})$. This follows from the fact that $\mathcal{A}_{M^{\tau}_{+}}$ is dense in $L^{2}(\mathcal{D}'(M^{\tau}_{+}), r^{\bullet}_{M^{\tau}_{+}}\mu)$ (see e.g. \cite[Corollary 7.12.2]{B}).
\end{remark}

Let  $\rho_{k}$ be a sequence of densities on $\mathcal{D}(M)$, we define probability measures
$$ \mu^{\rho}_{k}=N_{k}\rho_{k}\mu_{k}$$
where $N_{k}=\|\rho_{k}\|^{-1}_{L^{1}(\mu_{k})} $ are normalization constants. In what follows we suppose that the sequence $N_{k}$ is bounded.

Let $N$ be a $d$-dimensional connected Riemannian manifold with reflection positivity data $(N_{\pm}, \Theta)$ defined as in Section \ref{refposrimman}. (For notational simplicity we denote reflections on different manifolds by the same letter $\Theta$.) Let further $I_{k}: \mathcal{D}(N) \rightarrow \mathcal{D}(M)$ be a sequence of operators commuting with the reflections $\Theta$ and with the property that there exists a sequence $\tau'_{k} > 0$ decreasing to 0 such that $I_{k}$ maps functions supported in $N^{\tau'_{k}}_{\pm}$ to functions supported in $M^{\tau_{k}}_{\pm}$. Finally, suppose that the characteristic functionals  $S_{\mu^{\rho}_{k}}\circ I_{k}$ converge pointwise to the characteristic functional of a measure $\mu^{\rho}_{0}$ on $\mathcal{D}'(N)$.

\begin{lemma}\label{repowdenlem} Assume that $\rho_{k}$ can be factored as a product of densities \begin{equation}\label{factrho}\rho_{k}=\rho^{+}_{k}\,\rho^{0}_{k}\,\Theta(\rho^{+}_{k})
\end{equation}
where $\Theta(\rho^{+}_{k})$ stands for the natural action of  $\hspace{2pt}\Theta$ on $\rho^{+}_{k}$, $\rho^{+}_{k}$ belongs to $\mathcal{R}_{\mu_{k}}(M^{\tau_k}_{+})$ defined in Remark \ref{remonre} and one has
\begin{equation}\label{contonrhop}
\lim_{k \rightarrow \infty}\|\rho^{0}_{k}-1\|_{L^2(\mu_{k})}=0,
\end{equation}
\begin{equation}\label{contonrhozero}
\|\rho^{+}_{k}\Theta(\rho^{+}_{k}))\|_{L^2(\mu_{k})}< K
\end{equation}
for some constant $K>0$.

Then $\mu^{\rho}_{0}$ is reflection positive with respect to $(N_{\pm}, \Theta)$.
\end{lemma}
\begin{proof} We consider the sequence of probability measures
$$ \bar{\mu}^{\rho}_{k}=\bar{N}_{k}\rho^{+}_{k}\Theta(\rho^{+}_{k})\mu_{k},$$
where $\bar{N}_{k}$ are normalization constants. We shall first show that the measures $\bar{\mu}^{\rho}_{k}$ are reflection positive with respect to $(M^{\tau_{k}}_{\pm}, \Theta)$ and then that the sequences $S_{\mu^{\rho}_{k}}\circ I_{k}$ and $S_{\bar{\mu}^{\rho}_{k}}\circ I_{k}$ have the same limit.

We fix $k$ and note that for every $\Phi \in \mathcal{R}_{\bar{\mu}^{\rho}_{k}}(M^{\tau_k}_{+})$ one has
$$\langle \Phi,\Theta \Phi \rangle_{L^2(\bar{\mu}^{\rho}_{k})} = \bar{N}_{k} \langle  \Phi\rho^{+}_{k},\Theta (\Phi \rho^{+}_{k})\rangle_{L^{2}(\mu_{k})}.$$
Since $\mu_{k}$ is reflection positive with respect to $(M^{\tau_{k}}_{\pm}, \Theta)$ and $\rho^{+}_{k} \in \mathcal{R}_{\mu_{k}}(M^{\tau_k}_{+})$, by Remark \ref{remonre} the latter quantity is non-negative, hence $\bar{\mu}^{\rho}_{k}$ is reflection positive with respect to $(M^{\tau_{k}}_{\pm}, \Theta)$ and  $S_{\bar{\mu}^{\rho}_{k}}\circ I_{k}$ is reflection positive with respect to $(N^{\tau'_{k}}_{\pm}, \Theta)$ by our assumptions on $I_{k}$.

Next we observe that  the Cauchy-Schwartz inequality implies
$$ |N_{k}^{-1}  - \bar{N}_{k}^{-1}|\leq \int_{\mathcal{D}(M)}(\rho^{+}_{k}\Theta(\rho^{+}_{k}))^{2}d\mu_{k}\int_{\mathcal{D}(M)}|\rho^{0}_{k}-1|^{2}d\mu_{k}, $$
hence by (\ref{contonrhop}), (\ref{contonrhozero}) and the boundedness of $N_{k}$ one has
\begin{equation}\label{nknkbarze} N_{k}  - \bar{N}_{k} \rightarrow 0
\end{equation}
as $k \rightarrow \infty$. Similarly one finds that

\begin{equation}\label{intnkbarze}
(\bar{N}^{-1}_{k}S_{\bar{\mu}^{\rho}_{k}}\circ I_{k}  - N^{-1}_{k}S_{\mu^{\rho}_{k}}\circ I_{k})(g)  \rightarrow 0
\end{equation}
for every $g \in \mathcal{D}(N)$. Now (\ref{nknkbarze}) and  (\ref{intnkbarze}) together with the boundedness of  $N_{k}$ imply that
$$ S_{\bar{\mu}^{\rho}_{k}}\circ I_{k}  - S_{\mu^{\rho}_{k}}\circ I_{k}  \rightarrow 0  ,$$
pointwise, hence the reflection positivity of $\mu^{\rho}_{0}$  with respect to $(N_{\pm}, \Theta)$ follows from Lemma \ref{contrepo} applied to the sequence of measures with characteristic functionals $ S_{\bar{\mu}^{\rho}_{k}}\circ I_{k}$.
\end{proof}
%Utilizing the last Proposition and the results in Appendix \ref{uniinteggrasubs} allows us to formulate the following useful criterion for reflection positivity.
%assume each term of the factorization is less that the whole density as well pointwise convergence  and assuming proposition B.2

We now assume that $M$ is compact and that the densities $\rho_{k}$ have the form
\begin{equation}\label{genforofdens}
\rho_{k}=\exp\Bigl(\int_{M}F_{k}(f(x))dx\Bigr), \quad f \in \mathcal{D}(M),
\end{equation}
where $F_{k}: \mathbb{R} \rightarrow  \mathbb{R}$ are continuous functions and the integral over $M$ is with respect to the volume element induced by the metric.

We further suppose that the sequence of characteristic functionals $S_{\mu_{k}}$ of $\mu_{k}$, considered as measures on $\mathcal{D}'(M)$ (cf. Remark \ref{remonincmes}), is equicontinuous on $\mathcal{D}(M)$.  Then Lemma \ref{repowdenlem} and the results in Appendix \ref{uniinteggrasubs} allow  us to formulate the following criterion for reflection positivity.

\begin{proposition}\label{refposmacrit} Suppose that
\begin{equation}\label{fkcovtoze}
\lim_{k \rightarrow \infty} \int_{M\setminus (M^{\tau_{k}}_{+}\cup  M^{\tau_{k}}_{-})}F_{k}(f(x))dx = 0, \quad f \in \mathcal{D}(M),
\end{equation}
and that there exists $p>2$ such that
\begin{equation}\label{suppgretwo}\sup_{k}\|\rho_{k}\|_{L^{p}(\mu_{k})} < \infty.
\end{equation}
Then $\mu^{\rho}_{0}$ is reflection positive with respect to $(N_{\pm}, \Theta)$.
\end{proposition}

\begin{proof} We write
$$\rho_{k}=\rho^{+}_{k}\,\rho^{0}_{k}\Theta(\rho^{+}_{k}),   $$
where
$$ \rho^{+}_{k}(f)=\exp\Bigl(\int_{M^{\tau_{k}}_{+}}F_{k}(f(x))dx\Bigr), \quad f \in \mathcal{D}(M),$$
and
$$ \rho^{0}_{k}(f)= \exp\Bigl(\int_{M\setminus (M^{\tau_{k}}_{+}\cup  M^{\tau_{k}}_{-})}F_{k}(f(x))dx\Bigr), \quad f \in \mathcal{D}(M).$$
We shall show that this factorization of $\rho_{k}$ satisfies the conditions stated in Lemma \ref{repowdenlem}. We first observe that (\ref{suppgretwo}) implies $\sup_{k}\|\rho_{k}\|_{L^{2}(\mu_{C_{k}})} < \infty$, hence (\ref{contonrhozero}) holds for $\rho^{+}_{k}$.

Let us denote by $\overline{\rho}^{0}_{k}$ the extension of $\rho^{0}_{k}$ to $\mathcal{D}'(M)$ defined by setting $\overline{\rho}^{0}_{k}$ equal to 0 on $\mathcal{D}'(M)\setminus \mathcal{D}(M)$. A standard argument (see e.g. \cite[\S 1]{RR}) shows that the functionals $\overline{\rho}^{0}_{k}$ are Borel measurable. We note that by (\ref{fkcovtoze}) one has $\overline{\rho}^{0}_{k} \rightarrow 1_{\mathcal{D}(M)}$ pointwise, where $1_{\mathcal{D}(M)}$ is the indicator function of $\mathcal{D}(M)$. Thus, using the assumed equicontinuity of $S_{\mu_{k}}$, we conclude by Lemma \ref{gendomconvspcase} that $\overline{\rho}^{0}_{k} \rightarrow  1_{\mathcal{D}(M)}$ in $\mu_{k}$-measure. Further, by (\ref{suppgretwo}) and Lemma \ref{pgroneunile} the functionals $\rho^{2}_{k}$, and hence also $(\overline{\rho}^{0}_{k})^{2}$, are uniformly $\mu_{k}$-integrable. It now follows from Theorem \ref{genvitalithe} that (\ref{contonrhop}) holds for $\rho^{0}_{k}$, thus by Lemma \ref{repowdenlem} $\mu^{\rho}_{0}$ is reflection positive with respect to $(N_{\pm}, \Theta)$.
\end{proof}

% review of reflection positivity
% sufficient condition for the limiting measure to be reflection positivi

\section{Finite volume measures}\label{finitevolmes}
In this section we assume that the Riemannian manifold $M$ considered in Section \ref{refposlimea} is closed (i.e. compact  without boundary) and, utilizing a sequence of mollifiers on $M$, present examples of the limit measure construction from Appendix \ref{equilimisubs} that turn out to be isometry invariant and reflection positive.
\subsection{Isometry invariant mollifiers}\label{isomolli}
We denote the compact group of isometries of $M$ by $G$ and observe that $G$ acts naturally on $\mathcal{D}(M)$ and hence on $\mathcal{D}'(M)$. Further, there is a natural unitary representation of $G$ on $L^2(M)$. Given a neighborhood $U$ of the diagonal in $M \times M$, we denote by $d (U)$ the largest distance between a point in $U$ and the diagonal.

% emphasize that the operators A_{n} below are continuous. that's needed for the definition of the densities

\begin{lemma}\label{isomoll} There exists a sequence of smoothing continuous operators $A_{k}: \mathcal{D}'(M) \rightarrow \mathcal{D}(M)$ satisfying:

(1) The restriction of $A_{k}$ to $L^2(M)$ is a trace class operator and one has
$$\text{Tr}A_{k}= \int_{M} A_{k} (x,x) dx,$$
where $A_{k} (x,y)$ stands for the smooth integral kernel of $A_{k}$;

(2) The operators $A_{k}$ are uniformly bounded on $L^2(M)$;

(3) $A_{k} \rightarrow 1$ strongly on $L^2(M)$ as $k \rightarrow \infty$;

(4) $d (\text{supp}(A_{k} (x,y)))$ decreases to 0 as $k \rightarrow \infty$;

(5) $A_{k}$ commutes with the action of $G$ on $L^2(M)$.
\end{lemma}
\begin{proof} The existence of mollifiers $B_{k}$ satisfying (1)-(4) is a standard result (see e.g. \cite[Chapter II.7]{T}). To obtain isometry invariant mollifiers, we average over $G$ as follows. We set
$$ A_{k}(x,y)=\int_{G} B_{k}(g^{-1}x, g^{-1}y) dg,$$
where $dg$ denotes the normalized Haar measure on the compact group $G$. Then an application of Fubini's theorem shows that the operators $A_{k}$ with kernels  $A_{k}(x,y)$ satisfy (1) and (2). Similarly, (3) holds by Lebesgue's dominated convergence theorem. Finally, (4) is true since $G$ is a group of isometries.
\end{proof}

\subsection{Examples of limit measures}\label{isoivmesec} We set $C=(m^{2}+\triangle)^{-1}$, where $\triangle$ stands for the unique selfadjoint extension of the Laplacian on $L^{2}(M)$ and $m>0$ (we use the sign convention in which $\triangle$ is a nonnegative operator.) By Example \ref{gauhilex} the operator $C$ is the covariance of a unique mean zero Gaussian probability measure $\mu_{C}$ on $\mathcal{D}'(M)$ whose characteristic functional is given by (\ref{gauchar}).

Given a sequence of mollifiers $A_{k}$ satisfying the properties listed in Lemma \ref{isomoll}, we set $C_{k}=A^{*}_{k}CA_{k}$. Since the product of any pseudodifferential operator and a smoothing operator is a smoothing operator (see e.g. \cite[Chapter II.4]{T}), the operators $C_{k}$ are smoothing, i.e. $C_{k}: \mathcal{D}'(M) \rightarrow \mathcal{D}(M)$. We denote the corresponding  mean zero Gaussian probability measures on $\mathcal{D}(M)$ given by Example \ref{gausmoex} by $\mu_{C_{k}}$.

\begin{example}\label{bintex} (Bounded interaction) We fix a bounded continuous function $F: \mathbb{R} \rightarrow \mathbb{R}$ and define a continuous density
\begin{equation}\label{fvme}
\rho_{F}(f)=\exp\Bigl(\int_{M}F(f(x))dx\Bigr), \quad f \in \mathcal{D}(M).
\end{equation}
We denote the corresponding probability measures given by $\rho_{F}$ and $\mu_{C_{k}}$ as in (\ref{mukdeffinvo}) by $\mu_{F,k}$. Using the boundedness of $F$ and the compactness of $M$, Proposition \ref{tightcrit}(b) (applied with $\mathcal{V}=\mathcal{D}(M)$, $M=N$ and $I_{k}=\text{Id}$) and Theorem \ref{levystheo} imply that the sequence $\mu_{F,k}$ contains a subsequence weakly convergent in $\mathcal{D}'(M)$. We fix such a subsequence and denote its limit by  $\mu_{F}$.
\end{example}

\begin{remark} Let $\rho_{k}$ be densities on $\mathcal{D}(M)$ satisfying (\ref{uestfornn}). Assume further that  $\rho_{k}$  converges pointwise to 1 as $k \rightarrow \infty$. Then by Lemmas \ref{gendomconvspcase} and \ref{pgroneunile} $\rho_{k}$ are uniformly $\mu_{C_{k}}$-integrable and $\rho_{k} \rightarrow 1 $ in $\mu_{C_{k}}$-measure, thus we conclude by Theorem \ref{genvitalithe} that $\|\rho_{k}\|_{L^{1}(\mu_{C_{k}})}\rightarrow 1 $. Hence one expects that the limit measures obtained from $\rho_{k}$ via Proposition \ref{tightcrit} would coincide with $\mu_{C}$ in this case.
\end{remark}

\begin{example}\label{wickpow} ($P(\phi)_{2}$ models on the 2-sphere)
We define the $n$-th Wick power of $f \in \mathcal{D}(M)$ with respect to $C_{k}$ by setting (cf. \cite[ Eq. (8.5.5)]{GJ})
$$:\hspace{-3pt}f(x)^{n}\hspace{-3pt}:_{C_{k}}=\sum_{j=0}^{[n/2]}\frac{(-1)^{j}n!}{(n-2j)!j!2^{j}}C_{k}(x,x)^{j}f(x)^{n-2j},$$
where $C_{k}(x,y)$ is the integral kernel of the smoothing operator $C_{k}$ and $[\hspace{3pt} \cdot \hspace{3pt} ]$ denotes the integer part of a number.

More generally, for every bounded from below polynomial $P$, one can consider the corresponding polynomial in the Wick powers which we denote by $:\hspace{-2pt}P(f)\hspace{-2pt}:_{C_{k}}$. Further, we define  a sequence of functionals
$$\rho_{P,k}(\phi)= e^{-:P(\phi):_{C_{k}}}, \quad \phi \in \mathcal{D}'(M),$$
where
$$ :\hspace{-2pt}P(\phi)\hspace{-2pt}:_{C_{k}}= \int_{M}:\hspace{-2pt}P(A_{k}\phi)\hspace{-2pt}:_{C_{k}}$$
and probability measures
$\mu_{P,k}=N_{k} \rho_{P,k} \,\mu_{C} $ on  $\mathcal{D}'(M)$,
where $N_{k}$ are normalization constants.

When $M$ is the standard two-dimensional sphere $\mathbb{S}^2$ one can prove, proceeding exactly as in \cite[Chapter 8]{GJ}, that $:\hspace{-2pt}P(\phi)\hspace{-2pt}:_{C_{k}}$ converge in $L^2(\mu_{C})$ to a limit $:\hspace{-4pt}P(\phi)\hspace{-4pt}:_{C} $ as $k\rightarrow \infty$ and that $e^{-:P(\phi):_{C}}$ belongs to $L^1(\mu_{C})$ (see also \cite[Section 11.1]{BJM}). Using this, one can show as in the proof of Eq. (0.8) in \cite{C} that
$$  \rho_{P,k} \rightarrow e^{-:P(\phi):_{C}} $$
in $L^p(\mu_{C})$ for $1\leq p <  \infty$. It follows that $\rho_{P,k}$ satisfy the assumptions of Proposition \ref{tightcrit} with respect to $C$, hence there exists a subsequence of $\mu_{P,k}$ converging weakly to a measure  $\mu_{P}$.
\end{example}
\begin{remark} We conjecture that an alternative construction of the $P(\phi)_{2}$ models may be obtained employing the sequence $\mu_{C_{k}}$. More precisely, we conjecture that the densities
$$ \widetilde{\rho}_{P,k}= \exp \Bigl(-\int_{M}:\hspace{-2pt}P(f)\hspace{-2pt}:_{C_{k}}dx\Bigr), \quad f \in \mathcal{D}(M) $$
 satisfy the assumptions of Proposition \ref{tightcrit} with respect to $C_{k}$ and the resulting limit measures of the sequence $ \widetilde{\mu}_{P,k}=N_{k} \widetilde{\rho}_{P,k}\,\mu_{C_{k}} $ are equivalent in appropriate sense to $\mu_{P}$.
\end{remark}

Following \cite{JR}, we now impose an additional restriction on the reflection positivity data ($M_{\pm}, \Theta$) considered in Section \ref{refposrimman}. We suppose that $\Theta$ fixes the points in  $M_{0}$ and induces hyperplane reflections on the tangent spaces at points in $M_{0}$. This assumption ensures that the covariance $C$ is reflection positive  with respect to ($M_{\pm}, \Theta$) (cf. \cite[Theorem 1]{JR}).

\begin{proposition} The measures  $\mu_{F}$ and $\mu_{P}$ on $\mathcal{D}'(M)$ constructed in Examples \ref{bintex} and \ref{wickpow} respectively are isometry invariant and reflection positive with respect to ($M_{\pm}, \Theta$).
\end{proposition}
\begin{proof} Since the operators $C$ and $A_{k}$ are isometry invariant, so are the measures $\mu_{C}$ and $\mu_{C_{k}}$. This, together with the isometry invariance of the densities constructed in  Examples \ref{bintex} and \ref{wickpow} implies the isometry invariance of the measures $\mu_{F}$ and $\mu_{P}$ on $\mathcal{D}'(M)$.

Since one has
$$ \langle C_{k}f, \Theta f \rangle =\langle CA_{k}f, \Theta  A_{k}f\rangle , \quad f \in L^{2}(M),$$
it follows from Lemma \ref{isomoll}(4) and the reflection positivity of $C$ that there exists a sequence $\tau_{k} >0$  decreasing to zero such that $C_{k}$ and hence $\mu_{C_{k}}$ is reflection positive with respect to  $(M^{\tau_{k}}_{\pm}, \Theta)$. Thus by Proposition \ref{refposmacrit} (taking $\mu_{k}=\mu_{C_{k}}$, $M=N$ and $I_{k}=\text{Id}$) the  measure $\mu_{F}$ is reflection positive  with respect to ($M_{\pm}, \Theta$). We note that Proposition \ref{refposmacrit} does not imply the reflection positivity of $\mu_{P}$, however the latter is well-known (see e.g. \cite[Theorem 10.4.3]{GJ}).
\end{proof}

\section{Infinite volume measures}\label{theinflimsec}

\subsection{Conformal maps, Laplacians, and the stereographic projection}
Let $(M, g_{M})$ and $(N, g_{N})$ be a pair of $d$-dimensional Riemannian manifolds and let $\eta: M \rightarrow N$ be a conformal diffeomorphism with conformal factor $\Lambda_{\eta}\in C^{\infty}(M)$, i.e.  $\eta^{*}g_{N}=\Lambda^{2}_{\eta}\hspace{2pt}g_{M}$ (Here and below upper $*$ stands for the pull-back map acting on functions and tensors.) Then the operator $U_{\eta}$ given by
\begin{equation}\label{uniconfo}
U_{\eta}f=\Lambda^{d/2}_{\eta}\eta^{*}f, \quad f \in \mathcal{D}(N)
\end{equation}
extends to a unitary operator from $L^{2}(N)$ to $L^{2}(M)$ such that $U^{-1}_{\eta}=U_{\eta^{-1}}$.

We denote by $\mathbb{S}^{d}$ the $d$-dimensional unit sphere in $\mathbb{R}^{d+1}$ centered at the origin and set  $\mathbb{S}_{p}^{d}= \mathbb{S}^{d}\setminus\{(0,0,\ldots, 0, 1)\}$. We regard $\mathbb{S}^{d}$ and $\mathbb{S}_{p}^{d}$ as  Riemannian manifolds with respect to the metric induced by the standard metric on $\mathbb{R}^{d+1}$ and in what follows identify $L^{2}(\mathbb{S}_{p}^{d})$ with $L^{2}(\mathbb{S}^{d})$.

Let $\alpha: \mathbb{S}_{p}^{d} \rightarrow \mathbb{R}^{d}$ be the stereographic projection map with  its $i$-th component given by

\begin{equation}\label{sterpro}
\alpha_{i}(x)=\frac{x_{i}}{1-x_{d}}, \quad i=0,\ldots, d-1,
\end{equation}
where $x=(x_{0},\ldots, x_{d}) \in \mathbb{R}^{d+1}$. We note that $\alpha$ is conformal with $\Lambda_{\alpha}=(1-x_{d})^{-1}$ and $\Lambda_{\alpha^{-1}}=\sqrt{2}(1+\|y\|^2)^{-1/2}$, where $y \in \mathbb{R}^{d}$.

Another example of a conformal map is provided by the scaling map $$\beta_{k}y=ky$$ on $\mathbb{R}^{d}$ defined for every $k>0$; clearly one has $\Lambda_{\beta_{k}}=k$.

A second order differential operator $D$ on a Riemannian manifold on $M$ is said to be {\em conformally covariant} if for every conformal $\eta: M \rightarrow M$ one has
$$D=U^{-1}_{\eta}\Lambda^{-1}_{\eta}D \Lambda^{-1}_{\eta}U_{\eta},$$
where the powers of $\Lambda_{\eta}$ act by multiplication. It is well-known that the Laplacian $\triangle_{E}$ on $\mathbb{R}^{d}$ is conformally covariant. Another example of a conformally covariant operator is the {\em conformal Laplacian} (also known as the Yamabe operator) on the sphere defined by $\triangle_{S}^{c}=\triangle_{S}+d(d-2)/4$, where $\triangle_{S}$ is the Laplacian on $\mathbb{S}^{d}$. The conformal covariance of $\triangle^{S}_{c}$ implies (see e.g. \cite{Gr})
\begin{equation}\label{confconn}
\triangle_{E}= U^{-1}_{\alpha}\Lambda^{-1}_{\alpha} \triangle_{S}^{c}\Lambda^{-1}_{\alpha}U_{\alpha}.
\end{equation}

\subsection{The scaling limit} In this subsection we present our main construction which involves taking simultaneously the ultraviolet and infrared (=infinite volume) limit in order to obtain probability measures on $\mathcal{D}'(\mathbb{R}^{d})$. We refer to this limiting procedure as a {\em scaling limit} because of the use the scaling maps $\beta_{k}$. We note that somewhat similar scaling limit has been described in \cite[Section 3]{Sc2} in the context of Algebraic QFT.

We define a sequence of covariance operators on $L^{2}(\mathbb{S}^{d})$  via
\begin{equation}\label{comforeso}
C_{S,k}= U_{\alpha}U_{\beta_{k}}(\triangle_{E}+m^{2})^{-1}U_{\beta_{k}}^{-1}U_{\alpha}^{-1}, \quad k \in \mathbb{N}.
\end{equation}
The conformal covariance of $\triangle_{E}$ implies
\begin{equation}\label{confsqro}
U_{\beta_{k}}(\triangle_{E}+m^{2})^{-1}U_{\beta_{k}}^{-1}= k^{2}(\triangle_{E}+k^{2}m^{2})^{-1}.
\end{equation}
Using (\ref{confconn}) one finds that
$$ \triangle_{E}+k^{2}m^{2}=U_{\alpha}^{-1}\Lambda^{-1}_{\alpha}(\triangle_{S}^{c}+k^{2}m^{2}
\Lambda^{2}_{\alpha})\Lambda^{-1}_{\alpha}U_{\alpha}$$
which together with (\ref{confsqro}) easily implies
\begin{equation}\label{ceskintco}
C_{S,k}=k^{2}\Lambda_{\alpha}(\triangle_{S}^{c}+k^{2}m^{2}\Lambda^{2}_{\alpha})^{-1}\Lambda_{\alpha}.
\end{equation}
We set
$$\widetilde{C}_{S,k}= A_{k}^{*}C_{S,k} A_{k},$$
where $A_{k}$ is a sequence of mollifiers on $\mathcal{D}'(\mathbb{S}^{d})$ satisfying the properties listed in Lemma \ref{isomoll}. The operators $\widetilde{C}_{S,k}$ are smoothing, hence they define centered Gaussian probability measures $\mu_{\widetilde{C}_{S,k}}$ on $\mathcal{D}(\mathbb{S}^{d})$ (cf. Example \ref{gausmoex}).

Now assume that we are given a sequence of densities $\rho^{S}_{k}$ on $\mathcal{D}(\mathbb{S}^{d})$ satisfying the estimates
(\ref{uestfornn}) with respect to the sequence $\mu_{\widetilde{C}_{S,k}}$, i.e. one has
\begin{equation}\label{rhosoneest}
\inf_{k}\|\rho^{S}_{k}\|_{L^{1}(\mu_{\widetilde{C}_{S,k}})} >0, \quad \sup_{k}\|\rho^{S}_{k}\|_{L^{2}(\mu_{\widetilde{C}_{S,k}})} < \infty.
\end{equation}
In particular, these conditions are satisfied by the sequences of densities defined in (\ref{fvme}) setting $M=\mathbb{S}^{d}$.
We define a sequence of measures on $\mathcal{D}(\mathbb{S}^{d})$ via
$$ \mu^{S,\rho} _{k}=N^{S}_{k}\rho^{S}_{k}\mu_{\widetilde{C}_{S,k}},$$
where $N^{S}_{k}$ are normalization constants.
%\begin{equation}\label{suppkefuncio}
%S^{k}_{\mu^{S}}(f)= S_{\mu^{S}_{k}}(U_{\alpha}U_{\beta_{k}}f), \quad f \in \mathcal{D}( \mathbb{R}^{d}).
%\end{equation}

\begin{proposition}\label{dzelimfu} (a) The sequence of functionals $S_{\mu^{S,\rho} _{k}}\circ U_{\alpha}U_{\beta_{k}} $ on $\mathcal{D}( \mathbb{R}^{d})$ is equicontinuous.

(b) There exists a subsequence of $S_{\mu^{S,\rho} _{k}}\circ U_{\alpha}U_{\beta_{k}} $ converging to the characteristic functional $S_{\mu^{E,\rho}}$ of a probability measure $\mu^{E,\rho}$ on $\mathcal{D}'(\mathbb{R}^{d})$.
\end{proposition}
\begin{proof}
Since the operators $\widetilde{C}_{S,k}$ are uniformly $L^{2}$-bounded by (\ref{comforeso}) and Lemma \ref{isomoll}(2), the result follows from Proposition \ref{tightcrit} taking $M=\mathbb{S}^{d}$, $N=\mathbb{R}^{d}$, $C_{k}=\widetilde{C}_{S,k}$ and $I_{k}=U_{\alpha}U_{\beta_{k}}$.
\end{proof}

\subsection{Euclidean invariance}\label{euclinvarse}
We identify the group of isometries of $\mathbb{S}^{d}$ with $O(d+1)$ and the subgroup of isometries preserving the point $(0,0,\ldots, 0, 1)$ with $O(d)$, thus obtaining a natural action of  $O(d)$ on $\mathbb{S}^{d}$. Let $\{e_{0}, e_{1},\ldots, e_{d}\}$ be the standard orthonormal basis of $\mathbb{R}^{d+1}$. For $i=0, \ldots, d-1$ we denote by $L_{i}$ the generator of the one-parameter group of rotations in the plane spanned by $e_{i}$ and  $e_{d}$, considered as a element of the Lie algebra of $O(d+1)$. For every  translation $T=(t_{0},\ldots, t_{d-1})\in \mathbb{R}^{d}$  we set
$$g_{k}(T)=\exp \bigl(2k^{-1}\sum_{j}t_{j}L_{j}\bigr) \in O(d+1).$$
Further, we define
$$T_{E}(x)=x+T, \quad x \in \mathbb{R}^{d},$$
$$T_{S,k}(y)=g_{k}(T)\cdot y, \quad y \in \mathbb{S}^{d}.$$
For the following lemma, recall that $*$ denotes the induced map on functions.
\begin{lemma}\label{euinvlem} For every $f \in \mathcal{D}( \mathbb{R}^{d})$ one has
\begin{equation}\label{equsttstar}\lim_{k \rightarrow \infty}\|T^{*}_{S,k}U_{\alpha}U_{\beta_{k}}f- U_{\alpha}U_{\beta_{k}}T^{*}_{E}f \|_{L^2(\mathbb{S}^{d})}=0\end{equation}
and
\begin{equation}
\label{equstraapr}\lim_{k\rightarrow \infty}(S_{\mu^{S,\rho}_{k}}(T^{*}_{S,k}U_{\alpha}U_{\beta_{k}}f)-S_{\mu^{S\rho}_{k}}(U_{\alpha}U_{\beta_{k}}T^{*}_{E}f))=0.
\end{equation}
\end{lemma}

\begin{proof} Using the unitarity of $U_{\alpha}$ and $U_{\beta_{k}}$, we see that that (\ref{equsttstar}) is equivalent to
\begin{equation}\label{chavaroog} \|U_{\beta_{k}}^{-1}U_{\alpha}^{-1}T^{*}_{S,k}U_{\alpha}U_{\beta_{k}}f- T^{*}_{E}f \|_{L^2(\mathbb{R}^{d})} \rightarrow 0.
\end{equation}
It is easy to check that
$$ \|T^{*}_{S,k}\Lambda^{d/2}_{\alpha}\alpha^{*} U_{\beta_{k}}f-\Lambda^{d/2}_{\alpha}T^{*}_{S,k}\alpha^{*} U_{\beta_{k}}f \|_{L^2(\mathbb{S}^{d})} \rightarrow 0  ,$$
which implies that to prove (\ref{chavaroog}) it suffices to show that
\begin{equation}\label{chavartt} \|U_{\beta_{k}}^{-1}(\alpha^{*})^{-1}T^{*}_{S,k}\alpha^{*}U_{\beta_{k}}f- T^{*}_{E}f \|_{L^2(\mathbb{R}^{d})} \rightarrow 0.
\end{equation}
 We shall establish (\ref{chavartt})  by showing that
\begin{equation}\label{poinestain}\lim_{k \rightarrow \infty}k\hspace{2pt}\alpha(g_{k}(T)\cdot \alpha^{-1}(k^{-1}x))=x+T
\end{equation}
for every $x \in \mathbb{R}^{d}$. For notational simplicity we verify (\ref{poinestain}) below for $d=1$ only; the proof in the general case is similar.

Recall that the $i$-th component of $\alpha^{-1}$ is given by
\begin{equation}\label{sterproinv}
\alpha^{-1}_{i}(y)=\frac{2y_{i}}{\|y\|^{2}+1}, \quad i=0,\ldots, d-1,
\quad
\alpha^{-1}_{d}(y)=\frac{\|y\|^{2}-1}{\|y\|^{2}+1},
\end{equation}
where $y=(y_{0},\ldots, y_{d-1}) \in \mathbb{R}^{d}$.
A straightforward computation utilizing (\ref{sterpro}) and (\ref{sterproinv}) shows that for every $x \in \mathbb{R}$ one has
$$k\hspace{2pt}\alpha (g_{k}(T)\cdot \alpha^{-1}(k^{-1}x))
=\frac{\frac{2x}{k^{-2}x^{2}+1}\cos\frac{2T}{k}-\frac{k^{-2}x^{2}-1}{k^{-2}x^{2}+1}\cdot k\sin\frac{2T}{k}}{1-\frac{2k^{-1}x}{k^{-2}x^{2}+1}\sin\frac{2T}{k}-\frac{k^{-2}x^{2}-1}{k^{-2}x^{2}+1}\cos\frac{2T}{k}},
$$
which easily implies (\ref{poinestain}). Since the integration in (\ref{chavartt}) is over a compact subset of  $\mathbb{R}^{d}$, an application of Lebesgue's dominated convergence theorem shows that (\ref{chavartt}) follows from (\ref{poinestain}) thus establishing (\ref{equsttstar}).

We now observe that (\ref{rhosoneest}), (\ref{gaudiff}) and the uniform boundedness of $\widetilde{C}_{S,k}$  imply
$$|S_{\mu^{S,\rho}_{k}}(T^{*}_{S,k}U_{\alpha}U_{\beta_{k}}f)-S_{\mu^{S,\rho}_{k}}(U_{\alpha}U_{\beta_{k}}T^{*}_{E}f)|$$
$$\leq
K_{1} \bigl(1-\exp \bigl(-K_{2}\|T^{*}_{S,k}U_{\alpha}U_{\beta_{k}}f- U_{\alpha}U_{\beta_{k}}T^{*}_{E}f \|^{2}_{L^2(\mathbb{S}^{d})}\bigl)\bigr) ,$$
for some positive constants $K_{1}$ and $K_{2}$. Thus, using one  (\ref{equsttstar}), we conclude that
(\ref{equstraapr}) holds.
\end{proof}

\begin{lemma}\label{lemmonapp} One has
\begin{equation}\label{appgkequ}\lim_{k\rightarrow \infty}(S_{\mu^{S,\rho}_{k}}(T^{*}_{S,k}U_{\alpha}U_{\beta_{k}}f)-S_{\mu^{S,\rho}_{k}}(U_{\alpha}U_{\beta_{k}}f))=0
\end{equation}
for every $f \in \mathcal{D}( \mathbb{R}^{d})$.
\end{lemma}
\begin{proof}
Employing the Cauchy-Schwartz inequality and (\ref{rhosoneest}) one sees that
$$|S_{\mu^{S,\rho}_{k}}(T^{*}_{S,k}U_{\alpha}U_{\beta_{k}}f)-S_{\mu^{S,\rho}_{k}}(U_{\alpha}U_{\beta_{k}}f))|\leq
K \int_{\mathcal{D}( \mathbb{S}^{d})}\Phi_{k}(f,g) d\mu_{\widetilde{C}_{S,k}}(g)$$
for some positive constant $K$, where
$$\Phi_{k}(f,g)= \bigl|e^{i \langle g, T^{*}_{S,k}U_{\alpha}U_{\beta_{k}}f \rangle}-  e^{i\langle g, U_{\alpha}U_{\beta_{k}}f \rangle} \bigr|^{2}.$$
Since by definition $T^{*}_{S,k}$ converges strongly to 1 as $k \rightarrow \infty$, one has $\Phi_{k}(f,g)\rightarrow 0$ for all $f \in \mathcal{D}( \mathbb{R}^{d}), g \in \mathcal{D}( \mathbb{S}^{d})$. Let us fix $f$ and denote by $\Phi^{f}_{k}$  the extension of $\Phi_{k}(f,\cdot)$ to $\mathcal{D}'(\mathbb{S}^{d})$ defined by setting $\Phi^{f}_{k}$ equal to 0 on $\mathcal{D}'(\mathbb{S}^{d})\setminus \mathcal{D}(\mathbb{S}^{d})$. A standard argument (see e.g. \cite[\S 1]{RR}) shows that the maps $\Phi^{f}_{k}$ are Borel measurable.

Further, since $\widetilde{C}_{S,k}$ are uniformly bounded, by Proposition \ref{tightcrit} the sequence of  measures $\mu_{\widetilde{C}_{S,k}}$ on $\mathcal{D}'(\mathbb{S}^{d})$ has a sequence of characteristic functionals equicontinuous on $\mathcal{D}(\mathbb{S}^{d})$. Thus by Lemma  \ref{gendomconvspcase} $\Phi^{f}_{k}$ converges to 0 in $\mu_{\widetilde{C}_{S,k}}$-measure. Since $\Phi^{f}_{k}$ are uniformly bounded and hence  uniformly $\mu_{\widetilde{C}_{S,k}}$-integrable,  we conclude by Theorem \ref{genvitalithe} that (\ref{appgkequ}) holds.
\end{proof}

\begin{proposition}\label{eucinfu} If the densities $\rho^{S}_{k}$ are $O(d+1)$-invariant the limit measure $\mu^{E,\rho}$ constructed in Proposition \ref{dzelimfu} is invariant under the natural action of the Euclidean group $E(d)$.
\end{proposition}
\begin{proof}  Clearly the maps $\alpha$ and $U_{\alpha}$ are $O(d)$-equivariant (with respect to the action of $O(d)$ on $\mathbb{S}^{d}$ defined in the beginning of this subsection and the natural action of $O(d)$ on $\mathbb{R}^{d}$.) It follows, using the $O(d+1)$-invariance of $\rho^{S}_{k}$, that $S_{\mu^{S,\rho} _{k}}\circ U_{\alpha}U_{\beta_{k}}$ and hence $S_{\mu^{E,\rho}}$ are $O(d)$-invariant.

To show that $S_{\mu^{E,\rho}}$ is translation invariant, it suffices to prove that
given $T \in \mathbb{R}^{d}$, one has
$$ S_{\mu^{S,\rho}_{k}}(U_{\alpha}U_{\beta_{k}}T^{*}_{E}f) - S_{\mu^{S,\rho}_{k}}(U_{\alpha}U_{\beta_{k}}f) \rightarrow 0  .$$
But this follows directly from (\ref{appgkequ}) and (\ref{equstraapr}).
\end{proof}

\subsection{Verification of the Glimm-Jaffe axioms}\label{veriosaxi}
We begin by introducing the standard reflection positivity data on $\mathbb{R}^{d}$ and $\mathbb{S}^{d}$. We write $(t, x_{1}, \ldots, x_{d-1})$ for the coordinates of $x \in \mathbb{R}^{d}$ and  set
$$\mathbb{R}^{d}_{+}=\left\{t, x_{1}, \ldots, x_{d-1} \in \mathbb{R}^{d}|t > 0\right\}, $$
$$\mathbb{R}^{d}_{-}=\left\{t, x_{1}, \ldots, x_{d-1} \in \mathbb{R}^{d}|t < 0\right\}, $$
$$\mathbb{R}^{d}_{0}=\left\{t, x_{1}, \ldots, x_{d-1} \in \mathbb{R}^{d}|t = 0\right\}. $$
Further, we write $\Theta$ for the reflection
$$(t, x_{1}, \ldots, x_{d-1}) \mapsto (-t, x_{1}, \ldots, x_{d-1}).$$ We set $\mathbb{S}^{d}_{\pm}=\mathbb{S}^{d} \cap \mathbb{R}^{d+1}_{\pm}$ and write $\Theta$ also for the reflection on $\mathbb{S}^{d}\subset \mathbb{R}^{d+1}$ induced by the reflection with respect to the first coordinate in $\mathbb{R}^{d+1}$.

We now assume that the mollifiers $A_{k}$ obtained in Lemma \ref{isomoll} possess the following additional properties. First, we suppose that
\begin{equation}\label{fastrakco}
k\cdot d (\text{supp}(A_{k} (x,y))) \rightarrow 0.
\end{equation}
Clearly (\ref{fastrakco}) may be achieved simply by appropriately re-indexing a given sequence of mollifiers. Second, setting $ A^{E}_{k}=U_{\alpha}^{-1}A_{k}U_{\alpha} $, we assume that
\begin{equation}\label{molscacommu}
\|A^{E}_{k}U_{\beta_{k}} f-U_{\beta_{k}}f\|_{L^2} \rightarrow 0
\end{equation}
for every  $f \in \mathcal{D}(\mathbb{R}^{d})$. It follows from the standard local construction of mollifiers (see e.g. \cite[\S 2.29]{AF}), that it can be arranged that (\ref{molscacommu}) hold.

The following theorem asserts that under certain natural assumptions the scaling limit measure $\mu^{E,\rho}$ satisfies the Glimm-Jaffe axioms OS0--3 from \cite[Section 6.1]{GJ} except the local integrability condition in the regularity axiom OS1.

\begin{theorem}\label{theoglijaax}
Assume that $\mu^{E,\rho}$ is constructed from densities $\rho^{S}_{k}$ that satisfy (\ref{genforofdens})-(\ref{suppgretwo}). Then  $\mu^{E,\rho}$  has the following properties.

(1) $\mu^{E,\rho}$ is Euclidean invariant.

(2) $\mu^{E,\rho}$ is reflection positive with respect to $(\mathbb{R}^{d}_{\pm},\Theta)$.

(3) The extension $S_{\mu^{E,\rho}}^{\mathbb{C}}$ of $S_{\mu^{E}}$ (cf. Appendix \ref{compfanalysubsec}) is analytic and satisfies
\begin{equation}\label{regl2condifi}
|S^{\mathbb{C}}_{\mu^{E,\rho}}(f)|\leq K_{1}e^{K_{2}\|f\|^{2}_{L^2}}  ,\quad f \in \mathcal{D}(\mathbb{R}^{d})_{\mathbb{C}},
\end{equation}
for some positive constants $K_{1}$ and $K_{2}$.
\end{theorem}

\begin{proof}
Part (1) follows from Proposition \ref{eucinfu} since densities $\rho^{S}_{k}$ given by (\ref{genforofdens}) are $O(d+1)$-invariant. To prove Part (2) we note that since $U_{\alpha}U_{\beta_{k}}$ maps functions supported in $\mathbb{R}^{d}_{\pm}$ to functions supported in $\mathbb{S}^{d}_{\pm}$, the reflection positivity of $(\triangle_{E}+m^{2})^{-1}$ with respect to $(\mathbb{R}^{d}_{\pm},\Theta)$ implies the reflection positivity of
 $C_{S,k}$ with respect to $(\mathbb{S}^{d}_{\pm},\Theta)$. Using this and (\ref{fastrakco}), we see that  there exists a decreasing positive sequence $\tau_{k}$ such that $k \hspace{1pt} \tau_{k} \rightarrow 0$ and $\mu_{\widetilde{C}_{S,k}}$ is reflection positive with respect to $((\mathbb{S}^{d}_{\pm})^{\tau_{k}}, \Theta)$ (we use the notation $M^{\tau}_{\pm}$  introduced in the beginning of Section \ref{reposlimimesec}).

 Now observe that $U_{\beta_{k}}$ maps functions supported in  $(\mathbb{R}^{d}_{\pm})^{\tau}$ to functions supported in  $(\mathbb{R}^{d}_{\pm})^{\tau /k}$. It follows that there exists $\tau'_{k}>0$ decreasing to 0 such that $U_{\alpha}U_{\beta_{k}}$ maps functions supported in $(\mathbb{R}^{d}_{\pm})^{\tau'_{k}}$ to functions supported in $(\mathbb{S}^{d}_{\pm})^{\tau_{k}}$. Hence we can apply Proposition \ref{refposmacrit} with $M=\mathbb{S}^{d}$, $N=\mathbb{R}^{d}$ and $I_{k}=U_{\alpha}U_{\beta_{k}}$ to conclude that Part (2) holds.

To establish Part (3) we find, using (\ref{rhosoneest}) and the Cauchy-Schwartz inequality, that the extensions of  $S_{\mu^{S,\rho}_{k}}$ to $\mathcal{D}( \mathbb{R}^{d})_{\mathbb{C}}$  are analytic and satisfy
 $$ |S^{\mathbb{C}}_{\mu^{S,\rho}_{k}}(f)| \leq  K \left(\int_{\mathcal{D}'( \mathbb{S}^{d})}\exp(\phi(U_{\alpha}U_{\beta_{k}}\text{Im}f))d\mu_{\widetilde{C}_{S,k}}(\phi) \right)^{1/2}$$
 for some constant $K>0$. Using this, (\ref{extgauseti}) and the uniform boundedness of  $\widetilde{C}_{S,k}$, we conclude that
\begin{equation}\label{regl2condifikdep}
|S^{\mathbb{C}}_{\mu^{S,\rho}_{k}}(f)|\leq K_{1}e^{K_{2}\|f\|^{2}_{L^2}}  ,\quad f \in \mathcal{D}(\mathbb{R}^{d})_{\mathbb{C}},
\end{equation}
for some positive constants $K_{1}$ and $K_{2}$. It follows that the sequence $S^{\mathbb{C}}_{\mu^{S,\rho}_{k}}$ is uniformly locally bounded and hence equicontinuous by Theorem \ref{locbouequith}. Now Theorem \ref{genaath} implies that $S^{\mathbb{C}}_{\mu^{S,\rho}_{k}}$  contains a subsequence converging uniformly on compact subsets. Since the limit of this sequence is analytic, we conclude that $S_{\mu^{E,\rho}}$ has analytic extension that by virtue of (\ref{regl2condifikdep}) satisfies (\ref{regl2condifi}).
\end{proof}

\begin{example} The assumptions of Theorem \ref{theoglijaax} are satisfied when $\mu^{E,\rho}$ is constructed from the sequences of densities defined in (\ref{fvme}), setting $M=\mathbb{S}^{d}$.
\end{example}

\begin{theorem} Suppose that the measure $\mu^{E,\rho}$ satisfies the assumptions of Theorem \ref{theoglijaax}. Then $\mu^{E,\rho}$ defines a relativistic quantum field satisfying all Wightman axioms except possibly the uniqueness of the vacuum axiom.
\end{theorem}

\begin{proof} We observe that Theorem \ref{theoglijaax}(3) implies that the moments of all orders of $\mu^{E,\rho}$ exist and extend to continuous multilinear functionals on $L^{2}(\mathbb{R}^{d})\times \cdots \times L^{2}(\mathbb{R}^{d})$ (cf. \cite[Propositions 6.1.4 and 19.1.1]{GJ}). In what follows, we write
$$C_{\mu}(f,f)=\int \phi(f)^{2}d\mu(\phi)$$
for the (extended) second moment of a measure $\mu$.

It is proved in \cite[Chapter 19]{GJ} that the axioms OS0--3 imply the Wightman axioms except the uniqueness of the vacuum. An inspection of this proof reveals that the only place where the local integrability condition in the regularity axiom is used is Proposition 19.1.4 in \cite{GJ}.  However, the statement of this proposition holds if
\begin{equation}\label{otcondsecmo}
C_{\mu^{E,\rho}}(f,f)=o(t) \,\, \text{as} \,\, t \rightarrow 0,
\end{equation}
where $f=\chi_{(0,t)}\otimes h$, with $h \in \mathcal{D}(\mathbb{R}^{d-1})$, and $\chi_{(0,t)}$ is the indicator function of the interval $(0,t)$.

To verify (\ref{otcondsecmo}), we employ the Cauchy-Schwartz inequality and (\ref{comforeso}) to find that
$$C_{\mu^{S,\rho}_{k}}(U_{\alpha}U_{\beta_{k}}f,U_{\alpha}U_{\beta_{k}}f)$$
$$\leq \text{const}
\cdot \langle (U_{\alpha}U_{\beta_{k}}(\triangle_{E}+m^{2})^{-1}U_{\beta_{k}}^{-1}U_{\alpha}^{-1}A_{k}U_{\alpha}U_{\beta_{k}} f ,A_{k}U_{\alpha}U_{\beta_{k}}f \rangle $$
for every $k$ and every $f \in \mathcal{D}(\mathbb{R}^{d})$. Taking the limit $k   \rightarrow  \infty$ in the latter inequality we see, using (\ref{molscacommu}) and the polarization identity, that
$$ C_{\mu^{E,\rho}}(f,f) \leq
\text{const} \cdot \langle (\triangle_{E}+m^{2})^{-1} f ,f \rangle.$$
Since the covariance $(\triangle_{E}+m^{2})^{-1}$ is known to satisfy (\ref{otcondsecmo}), we conclude that (\ref{otcondsecmo}) holds for $C_{\mu^{E,\rho}}$ as well.
\end{proof}

\begin{remark} Using (\ref{comforeso}) and the unitarity of $U_{\alpha}$ and $U_{\beta_{k}}$, one finds that
\begin{equation}\label{freecanor}
\|C^{1/2}_{S,k}U_{\alpha}U_{\beta_{k}}f\|_{L^{2}(\mathbb{S}^{d})}=\| (\triangle_{E}+m^{2})^{-1/2} f \|_{L^{2}(\mathbb{R}^{d})}
\end{equation}
for every $k$ and every $f \in \mathcal{D}(\mathbb{R}^{d})$. Further, it follows from Lemma \ref{isomoll}(3) and (\ref{molscacommu}) that
\begin{equation}\label{diffctbar}
S_{\mu_{\widetilde{C}_{S,k}}}( U_{\alpha}U_{\beta_{k}}f)- S_{\mu_{C_{S,k}}}(U_{\alpha}U_{\beta_{k}}f) \rightarrow 0,
\end{equation}
 for every $f \in \mathcal{D}(\mathbb{R}^{d})$. Now  (\ref{freecanor}) and (\ref{diffctbar}) imply that the scaling limit $\mu^{E,\rho}$ in the case of a free scalar field coincides, as desired, with the Gaussian measure with covariance $(\triangle_{E}+m^{2})^{-1}$. This observation motivates the definition of the covariances $C_{S,k}$. One expects that if the scaling limit is taken using $(\triangle_{S}+m^{2})^{-1}$ instead of $C_{S,k}$ then this limit would be a massless theory, in agreement with the results of \cite{BV}.
 \end{remark}

We remind the reader that  measures satisfying the ergodicity axiom in addition to axioms OS0--3 (and hence relativistic quantum fields satisfying all Wightman axioms) can be obtained by an appropriate decomposition of the measure $\mu^{E,\rho}$, as explained in \cite[Section 19.7]{GJ}.

\appendix
\section{Measures on locally convex spaces}\label{meahb}
In this appendix we present some well-known results from measure theory on locally convex spaces; more detailed review may be found in \cite[Section 7.13]{B}. For background on topological vector spaces the reader is referred to \cite{T}. We use the notation introduced in the beginning of Section \ref{refposlimea} and by measure we always mean a {\em finite Radon measure}.

\subsection{Characteristic functionals}
Let $\mathcal{F}$ be a real locally convex vector space and let $\mathcal{F}\,'$ be its dual space considered with the weak* topology. The {\em characteristic functional} (or Fourier transform) of a measure $\mu$ on $\mathcal{F}\,'$ is a functional on $\mathcal{F}$ defined by
\begin{equation}\label{charde}
S_{\mu}(f)=\int_{\mathcal{F}\,'}e^{i\phi(f)}d\mu(\phi), \quad f \in \mathcal{F}.
\end{equation}
The functional $S_{\mu}$ is {\em positive definite}, i.e. for every $n$ and every $f_{1},\ldots, f_{n} \in \mathcal{F}$ the matrix whose $ij$-th entry is given by $S_{\mu}(f_{i}-f_{j})$ is positive semi-definite. Recall that a sequence of measures $\mu_{k}$  on a topological space $X$ {\em converges weakly} to a measure $\mu$ if
$$ \int_{X}fd\mu_{k} \rightarrow  \int_{X}fd\mu$$
for every bounded continuous function $f$ on $X$. Clearly the weak convergence of sequence of measures implies the convergence of their characteristic functionals. It turns out that the converse is also true for finite dimensional spaces (L\'{e}vy's theorem) and also for a wide class of locally convex spaces if equicontinuity is assumed.

We recall that  $\mathcal{D}(M)$ is a barrelled nuclear locally convex space for every smooth manifold $M$.

\begin{theorem}\label{levystheo} (cf. \cite[Chapter III, Corollary 2.6 and Example 2.3]{DF}) A sequence of measures on $\mathcal{D}'(M)$ converges weakly if their characteristic functionals are equicontinuous at zero and converge pointwise.
\end{theorem}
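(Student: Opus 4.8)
The plan is to treat this as an infinite-dimensional version of the classical L\'{e}vy continuity theorem and to proceed in three stages: first establish uniform tightness of the sequence $\{\mu_n\}$, then extract relatively compact subsequences via Prokhorov's theorem, and finally pin down a single limit using injectivity of the Fourier transform. I would begin with the easy observation that pointwise convergence at the origin gives $\mu_n(\mathcal{D}'(M)) = S_{\mu_n}(0) \to c$ for some finite $c$, so the total masses are uniformly bounded; this is needed before any compactness statement makes sense.

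The heart of the argument is converting equicontinuity at zero into uniform tightness, and this is where the nuclear structure of $\mathcal{D}(M)$ becomes indispensable. Equicontinuity provides, for each $\epsilon>0$, a continuous Hilbertian seminorm $p$ with $\sup_n \mathrm{Re}\,\big(S_{\mu_n}(0)-S_{\mu_n}(f)\big) < \epsilon$ whenever $p(f)\le 1$, where $\mathrm{Re}\,\big(S_{\mu_n}(0)-S_{\mu_n}(f)\big)=\int(1-\cos\phi(f))\,d\mu_n$. Nuclearity then furnishes a larger continuous Hilbertian seminorm $q\ge p$ for which the canonical map between the associated local Hilbert spaces is Hilbert--Schmidt. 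A Minlos-type estimate --- the infinite-dimensional refinement of the one-dimensional bound $\mu(\{|x|>R\}) \le C\!\int_{|t|\le 1/R}(1-\mathrm{Re}\,\hat{\mu}(t))\,dt$, in which the Hilbert--Schmidt norm of the embedding controls the sum over an orthonormal basis --- then yields, for a suitable $R=R(\epsilon)$, the uniform bound $\mu_n\big(\mathcal{D}'(M)\setminus K_R\big) < \epsilon$ for all $n$, where $K_R=(R^{-1}U_q)^{\circ}$ is the polar of a neighborhood of the origin and hence weak*-compact by the Banach--Alaoglu theorem. This is exactly uniform tightness.

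With uniform tightness and bounded masses secured, I would invoke Prokhorov's theorem (applicable since the relevant supports lie in a metrizable, indeed Suslin, portion of $\mathcal{D}'(M)$) to conclude that every subsequence of $\{\mu_n\}$ has a further weakly convergent subsequence. Since $\phi\mapsto e^{i\phi(f)}$ is bounded and continuous, weak convergence forces the characteristic functional of any such subsequential limit to coincide with the given pointwise limit of $S_{\mu_n}$; because a Radon measure on $\mathcal{D}'(M)$ is determined by its characteristic functional, all subsequential limits agree, and a standard subsequence argument upgrades this to weak convergence of the entire sequence. I expect the passage from equicontinuity at zero to uniform tightness to be the main obstacle, as it is precisely the point where infinite-dimensionality must be controlled and where nuclearity is essential: without it the finite-dimensional estimates do not assemble into tightness on the dual. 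The remaining steps are comparatively routine once the compact sets $K_R$ have been produced.
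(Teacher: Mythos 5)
The paper contains no proof of this statement: Theorem \ref{levystheo} is imported as background, with a pointer to \cite[Chapter III, Corollary 2.6 and Example 2.3]{DF}, so there is no internal argument to compare against. Your proposal is, in effect, a reconstruction of the standard proof behind that cited result (the L\'{e}vy continuity theorem for nuclear spaces), and its outline is correct: bounded total masses from convergence at $0$; equicontinuity at $0$ together with a Hilbertian zero-neighborhood basis (available precisely because $\mathcal{D}(M)$ is nuclear) giving a seminorm $p$ that controls $\int(1-\cos\phi(f))\,d\mu_n$ uniformly in $n$; a dominating Hilbertian seminorm $q$ with Hilbert--Schmidt embedding and the Minlos estimate converting this into uniform tightness on the weak*-compact polars $K_R$ (Banach--Alaoglu); Prokhorov's theorem for Radon measures on a completely regular space; and identification of all subsequential limits via injectivity of the Fourier transform, upgraded to convergence of the full sequence by the sub-subsequence argument.

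Two points deserve attention. First, the single genuinely nontrivial ingredient --- the Minlos-type tightness estimate, in which Gaussian averaging over an orthonormal basis of the $q$-ball lets the Hilbert--Schmidt norm of the embedding control $\mu_n(\mathcal{D}'(M)\setminus K_R)$ --- is invoked rather than proved. That is defensible, since it is the same engine behind the Bochner--Minlos theorem (Theorem \ref{minth}), which the paper likewise takes from \cite{DF}, but a fully self-contained proof would have to carry out that computation. Second, be careful with Prokhorov on the non-metrizable space $\mathcal{D}'(M)$: uniform tightness gives compactness of the closure in the weak topology, not by itself sequential compactness, so the extraction of a convergent \emph{subsequence} needs the observation you made only parenthetically --- the measures concentrate on a countable union of metrizable compacts (polars are metrizable since $\mathcal{D}(M)$ is separable), where a diagonal argument applies. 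Make that step explicit and the argument is complete.
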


\subsection{Constructing measures from functionals}
The following result is known as the Bochner-Minlos theorem.
\begin{theorem}(see  \cite[Chapter III, Theorem 1.3]{DF})\label{minth} Let $\mathcal{F}$ be a nuclear locally convex space. For a functional on $\mathcal{F}$ to be the characteristic functional of a measure on  $\mathcal{F}\,'$ it is sufficient, and necessary if $\mathcal{F}$ is barrelled, that it be positive definite and continuous at zero.
\end{theorem}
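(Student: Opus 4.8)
The plan is to handle the necessity and sufficiency directions separately, normalizing so that $S(0)=1$; the general case follows by rescaling. Almost all of the work lies in sufficiency, and the single genuinely hard point is promoting a cylinder measure to a countably additive one, which is exactly where nuclearity is used.

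\emph{Necessity.} Suppose $S=S_\mu$ for a measure $\mu$ on $\mathcal{F}'$. Positive definiteness is immediate, since for any $f_1,\dots,f_n\in\mathcal{F}$ and $c_1,\dots,c_n\in\C$,
\[\sum_{j,k} c_j\,\overline{c_k}\,S(f_j-f_k)=\int_{\mathcal{F}'}\Bigl|\sum_j c_j\,e^{i\phi(f_j)}\Bigr|^2 d\mu(\phi)\ge 0.\]
Continuity at zero follows by dominated convergence: as $f\to 0$ in $\mathcal{F}$ the integrand $e^{i\phi(f)}\to 1$ pointwise and is bounded by $1$, so $S(f)\to S(0)$. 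It is here that the barrelledness hypothesis is needed, to upgrade this sequential statement to genuine topological continuity.

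\emph{Sufficiency.} First I would invoke the classical finite-dimensional Bochner theorem: for every finite-dimensional subspace $E\subset\mathcal{F}$ the restriction $S|_E$ is positive definite and continuous, hence is the Fourier transform of a probability measure $\mu_E$ on the dual $E'$. Uniqueness of Fourier transforms makes the family $\{\mu_E\}$ consistent under the natural projections, so Kolmogorov's extension theorem assembles it into a finitely additive cylinder measure on $\mathcal{F}'$. The crux is to show that this cylinder measure is in fact countably additive and concentrated on $\mathcal{F}'$; this is Minlos's lemma and the place I expect the main obstacle.

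The idea is to convert the continuity of $S$ at zero into a uniform tightness estimate using a Hilbert--Schmidt factorization. Fixing $\epsilon>0$, continuity yields a continuous Hilbertian seminorm $p$ with $|1-S(f)|<\epsilon$ whenever $p(f)\le 1$, and nuclearity lets me choose a finer Hilbertian seminorm $q$ for which the linking map $\iota:\mathcal{F}_q\to\mathcal{F}_p$ between the associated Hilbert spaces is Hilbert--Schmidt. Integrating $1-\mathrm{Re}\,S$ on each finite-dimensional picture against a Gaussian measure $\gamma_t$ of covariance $t$ relative to the $q$-inner product, and using the identity $\int e^{i\phi(\xi)}d\gamma_t(\xi)=e^{-t\|\phi\|_{q'}^2/2}$, turns the continuity bound into a trace estimate: Chebyshev gives $\int(1-\mathrm{Re}\,S)\,d\gamma_t\le\epsilon+2t\,\|\iota\|_{HS}^2$, while the same integral equals $\int(1-e^{-t\|\phi\|_{q'}^2/2})\,d\mu_E$, which dominates a multiple of $\mu_E(\|\phi\|_{q'}>R)$ upon taking $t=R^{-2}$. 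One thus obtains
\[\mu_E\bigl(\{\phi\in E':\|\phi\|_{q'}>R\}\bigr)\le C\Bigl(\epsilon+\tfrac{\|\iota\|_{HS}^2}{R^2}\Bigr),\]
uniformly in $E$, with $\|\iota\|_{HS}$ finite precisely because of nuclearity. Since closed balls of the dual Hilbert space $\mathcal{F}_q'$ are weak*-compact in $\mathcal{F}'$, this uniform bound is the tightness required by Prokhorov's theorem, so the cylinder measure extends to a Radon measure $\mu$ on $\mathcal{F}'$; its characteristic functional agrees with $S$ on each finite-dimensional subspace and hence everywhere. The passage from the finite-dimensional, pointwise control of $S$ to this uniform tightness estimate is the hardest step, while the surrounding projective-limit bookkeeping is routine.
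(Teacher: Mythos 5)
First, a remark on the benchmark: the paper itself gives no proof of this statement — it is quoted from Dalecky--Fomin — so the only meaningful comparison is with the standard literature proof of the Bochner--Minlos theorem. Your sufficiency argument is precisely that standard proof: finite-dimensional Bochner plus consistency to produce a cylinder measure, the Hilbert--Schmidt linking map $\iota:\mathcal{F}_q\to\mathcal{F}_p$ supplied by nuclearity, the Gaussian-smearing estimate $\int(1-\mathrm{Re}\,S)\,d\gamma_t\le\epsilon+2t\|\iota\|_{HS}^2$ (which, to be precise, comes from integrating the pointwise bound $1-\mathrm{Re}\,S(f)\le\epsilon+2p(f)^2$ against $\gamma_t$; the Chebyshev-type step is the one on the $\mu_E$ side), and the resulting bound $\mu_E(\|\phi\|_{q'}>R)\le C(\epsilon+\|\iota\|_{HS}^2 R^{-2})$ uniform in $E$, which via Alaoglu--Bourbaki compactness of polars and the Prokhorov-type extension theorem for cylinder measures yields a Radon measure. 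Modulo routine points left implicit (that positive definiteness and continuity at zero give $|S(f)-S(g)|^2\le 2S(0)\bigl(S(0)-\mathrm{Re}\,S(f-g)\bigr)$, hence continuity of $S$ on every finite-dimensional subspace, which you need before invoking Bochner; and Hahn--Banach to identify the finite-dimensional marginals of the polar balls), this direction is sound.

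The genuine gap is in your necessity argument. Dominated convergence gives only \emph{sequential} continuity of $S_\mu$ at zero, and there is no principle by which barrelledness upgrades sequential continuity of a nonlinear functional to topological continuity; the relevant spaces here, such as $\mathcal{D}(M)$, are barrelled but not metrizable, so this step cannot be waved through. The actual mechanism is different: since $\mu$ is a finite Radon measure on $\mathcal{F}'$, for every $\epsilon>0$ there is a weak*-compact set $K\subset\mathcal{F}'$ with $\mu(\mathcal{F}'\setminus K)<\epsilon$; barrelledness enters through the Banach--Steinhaus theorem, which asserts that pointwise bounded (in particular weak*-compact) subsets of the dual of a barrelled space are \emph{equicontinuous}. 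Equicontinuity of $K$ produces a neighborhood $U$ of zero in $\mathcal{F}$ on which $\sup_{\phi\in K}|\phi(f)|$ is uniformly small, whence
\begin{equation*}
|S_\mu(f)-S_\mu(0)|\le \sup_{\phi\in K}\bigl|e^{i\phi(f)}-1\bigr|\,\mu(K)+2\mu(\mathcal{F}'\setminus K)
\end{equation*}
is small for $f\in U$. You correctly located \emph{where} barrelledness must be used, but the argument you gesture at in its place (a sequential-to-topological upgrade) is not one that exists, so the necessity half of the proof is missing its key step.
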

We note that the argument proving Theorem \ref{minth} implies the following fact.
\begin{corollary}\label{unicountaddm} Let $\mu_{k}$ be a sequence of probability measures on a nuclear locally convex space $\mathcal{F}$. If the sequence of characteristic functionals $S_{\mu_{k}}$ is equicontinuous at zero the measures $\mu_{k}$ are uniformly countably additive in the sense that $\sum_{i=1}^{n}\mu_{k}(U_{i})$  converges to $\mu_{k}(\cup_{i=1}^{\infty}U_{i})$ uniformly in $k$ as $n \rightarrow \infty$ for every sequence $U_{i}$ of pairwise disjoint Borel subsets of $\mathcal{F}$.
\end{corollary}

We recall that a measure $\mu$ on a locally convex space $\mathcal{F}$ is called {\em Gaussian} if for every $\phi \in \mathcal{F}\,'$ the image measure $\phi^{\bullet}\mu$ on $\mathbb{R}$ is Gaussian.

\begin{example}\label{gauhilex}
Assume that the nuclear space $\mathcal{F}$ is continuously embedded into a real Hilbert space $\mathcal{H}$. Then by Theorem \ref{minth} every nonnegative selfadjoint bounded operator $C$ on $\mathcal{H}$ defines a mean zero Gaussian probability measure $\mu_{C}$  whose characteristic functional is given by
\begin{equation}\label{gauchar} S_{\mu_{C}}(f)=e^{-\frac{1}{2}\langle Cf, f \rangle}, \quad f \in \mathcal{F}.
\end{equation}
The operator $C$ is called the {\em covariance} of the measure $\mu_{C}$.
\end{example}

\begin{example}\label{gausmoex}
Similarly, a nonnegative selfadjoint operator $C$ on $L^{2}(M)$ that extends to a smoothing operator, i.e. an operator mapping continuously $\mathcal{D}'(M)$ into $\mathcal{D} (M)$, defines a unique centered Gaussian probability measure on $\mathcal{D}(M)$. Indeed, in this case the functional given by (\ref{gauchar}) is defined on $\mathcal{D}'(M)$ and we obtain via Theorem \ref{minth} a measure on $\mathcal{D}''(M)=\mathcal{D}(M)$.
\end{example}

\begin{remark}\label{remonincmes} In this article we shall use the same notation for a measure defined on $\mathcal{D}(M)$ and its image (pushforward) measure on $\mathcal{D}'(M)$ under the natural inclusion $\mathcal{D}(M) \hookrightarrow \mathcal{D}'(M)$. Taking this image clearly amounts to restricting the characteristic functional of the initial measure from $\mathcal{D}'(M)$ to $\mathcal{D}(M)$. In this situation we shall say that the image measure is supported in $\mathcal{D}(M)$.
\end{remark}

\section{Equicontinuity and uniform integrability}\label{limmesdistrsec}
In this appendix we use the notation introduced in the beginning of Section \ref{refposlimea} as well as the notation and results from Appendix \ref{meahb}.
\subsection{Equicontinuity and existence of limits}\label{equilimisubs}
In this subsection we present a simple sufficient condition for the existence of a limit point of certain sequences of probability measures on $\mathcal{D}'(M)$. The following generalization of the Arzel\`{a}-Ascoli theorem is a special case of \cite[Chapter X, \S 2, No. 5, Corollary 1]{Bu}.
\begin{theorem}\label{genaath} Let $\{\Phi_{n}\}_{n \in \mathbb{N}}$ be an equicontinuous sequence of complex-valued functions on a topological space $X$ and assume that $\Phi_{n}(x)$ is bounded for every $x \in X$. Then $\{\Phi_{n}\}$ contains a subsequence converging uniformly on compact subsets to a continuous function on $X$.
\end{theorem}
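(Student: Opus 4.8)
The plan is to obtain the conclusion from two standard features of equicontinuous families, isolating at the end the one genuinely delicate point. The first feature is that \emph{on an equicontinuous set the topology of pointwise convergence coincides with that of uniform convergence on compact subsets}, and that a pointwise limit of such a family is automatically continuous. Granting this, the proof reduces to producing a subsequence of $\{\Phi_n\}$ that converges \emph{pointwise} on $X$: the limit is then continuous, and the convergence is automatically uniform on every compact set. I would record the elementary verification first, since it is exactly here that equicontinuity enters. If $\Phi_{n_k}\to\Phi$ pointwise and $K$ is compact, then for fixed $\varepsilon>0$ one covers $K$ by finitely many neighborhoods $U_{x_1},\dots,U_{x_m}$ on each of which $|\Phi_n(\cdot)-\Phi_n(x_i)|<\varepsilon$ uniformly in $n$; a three-$\varepsilon$ estimate gives $\sup_{y\in K}|\Phi_{n_k}(y)-\Phi(y)|\to 0$, and letting $k\to\infty$ in the same estimate shows $\Phi$ is continuous.

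It therefore remains to extract a pointwise convergent subsequence. Here I would exploit the pointwise boundedness to place, for each $x$, the sequence $\{\Phi_n(x)\}$ in a compact disk of $\mathbb{C}$, and then diagonalize over a countable set of points. The device that lets the points ``see'' the equicontinuity is the pseudometric
\[
d(x,y)=\sup_{n}|\Phi_n(x)-\Phi_n(y)|,
\]
which is finite by pointwise boundedness, induces a topology coarser than the given one precisely by equicontinuity, and renders every $\Phi_n$ nonexpansive. Choosing a countable set $\{x_j\}$ dense in $X$ (its existence is the delicate point, addressed below) and running a Cantor diagonal argument, I would select a subsequence $\{\Phi_{n_k}\}$ converging at every $x_j$; equicontinuity then promotes this to convergence at \emph{every} $x\in X$, by approximating $x$ with some $x_j$ lying in a common equicontinuity neighborhood and applying another three-$\varepsilon$ estimate. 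Combined with the first paragraph, this yields the uniform-on-compacts convergence to a continuous limit.

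The crux — and the step I expect to be the main obstacle — is the availability of a \emph{countable} dense set along which to diagonalize, i.e. a separability hypothesis: without it a single subsequence need not work simultaneously on all compact subsets, and indeed the bare statement can fail on, say, an uncountable discrete space (where the abstract Ascoli theorem still furnishes relative compactness, but only a cluster subnet rather than a subsequence). In the intended applications this is not an issue, since the theorem is invoked with $X=\mathcal{D}(M)$, which is separable; alternatively, one can argue invariantly by noting that each compact subset lands in a compact, hence separable, piece of the metric quotient of $(X,d)$, so that a countable dense set is available there. I would thus either assume separability outright or pass to this separable metric quotient before running the diagonal argument, and then conclude by the equicontinuity upgrade of the first paragraph.
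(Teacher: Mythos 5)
The paper does not actually prove this statement: it quotes it as a special case of Bourbaki's Ascoli theorem \cite[Chapter X, \S 2, No. 5, Corollary 1]{Bu}, whose natural output is that an equicontinuous, pointwise bounded family has \emph{compact} closure in the topology of compact convergence --- which yields cluster points and convergent subnets, not subsequences. Your proof is instead the classical diagonal/three-$\varepsilon$ argument, and both halves of it are carried out correctly \emph{given a countable dense subset of $X$}: the diagonal extraction of a pointwise convergent subsequence along that countable set, and the equicontinuity upgrade from pointwise convergence to uniform-on-compacts convergence with a continuous limit. More importantly, your diagnosis of the separability issue is not a pedantic worry but a genuine defect of the statement as printed: for an arbitrary topological space the theorem is \emph{false}. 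Concretely, take $X=\{0,1\}^{\mathbb{N}}$ with the discrete topology and $\Phi_{n}(x)=x_{n}$; every family on a discrete space is equicontinuous, the functions are bounded by $1$, and the compact subsets are exactly the finite ones, yet for any subsequence $(n_{k})$ the point $x$ defined by $x_{n_{k}}=0$ for $k$ even and $x_{n_{k}}=1$ for $k$ odd shows that $(\Phi_{n_{k}})$ fails to converge even pointwise. So a countability hypothesis must be added. In the paper's applications $X$ is $\mathcal{D}(M)$ or $\mathcal{D}(\mathbb{R}^{d})_{\mathbb{C}}$, which are separable, so nothing downstream is affected; equivalently, on the equicontinuous pointwise closure the compact-open topology coincides with pointwise convergence on a countable dense set, so Bourbaki's compact closure is metrizable, hence sequentially compact --- which is the abstract form of your diagonal argument.

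One caveat on your proposed fallback: passing to ``the separable metric quotient of $(X,d)$'' does not repair the general case. In the counterexample above $d(x,y)=1$ whenever $x\neq y$, so the quotient is an uncountable discrete metric space and is not separable; and separability of the image of each \emph{individual} compact set only produces a subsequence adapted to that one compact set, whereas the theorem demands a single subsequence that works simultaneously for all compact sets and all points of $X$ (the counterexample shows this is genuinely unattainable). The correct repair is the one you name first: add the hypothesis that $X$ has a countable dense subset, which is exactly what the paper's uses of the theorem provide.
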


In what follows $\mathcal{V}$ stands for either $\mathcal{D}(M)$ or $\mathcal{D}'(M)$. Let $\mu_{C_{k}}, \, k=1,2,3,\ldots$ be a sequence of  centered Gaussian probability measures on $\mathcal{V}$ with covariance operators $C_{k}$ (cf. Theorem \ref{minth} and Eq. (\ref{gauchar})).
Let $\rho_{k}$ be sequence of densities on $\mathcal{V}$, i.e. $\rho_{k} \in L^{1}(\mu_{C_{k}})$ and $\rho_{k}\geq 0$. We define a sequence of probability measures on $\mathcal{V}$ (and hence on $\mathcal{D}'(M)$, cf. Remark \ref{remonincmes}) by setting
\begin{equation}\label{mukdeffinvo}\mu^{C,\rho}_{k}=N_{k}\rho_{k}\mu_{C_{k}},
\end{equation}
where
$$ N_{k}^{-1}=\int_{\mathcal{V}} \rho_{k}d\mu_{C_{k}}$$
are normalization constants.
Let $N$ be another Riemannian manifold and consider a sequence of operators $I_{k}: \mathcal{D}(N) \rightarrow \mathcal{D}(M)$.
\begin{proposition}\label{tightcrit} Assume that the operators $C_{k}$ and $I_{k}$ are uniformly $L^{2}$-bounded and that the densities $\rho_{k}$ satisfy
\begin{equation}\label{uestfornn}
\| \rho_{k}\|_{L^{1}(\mu_{C_{k}})}\geq  K_{1}, \quad  \| \rho_{k}\|_{L^{2}(\mu_{C_{k}})}\leq K_{2}
\end{equation}
for some positive constants $K_{1}$ and $K_{2}$.

(a) Then the sequence $S_{\mu^{C,\rho}_{k}}\circ I_{k}$ of characteristic functionals  on $\mathcal{D}(N)$ is equicontinuous.

(b) There exists a subsequence of $S^{C,\rho}_{\mu_{k}}\circ I_{k}$ converging to the characteristic functional $S_{\mu_{0}}$ of a probability measure $\mu_{0}$ on $\mathcal{D}'(N)$.
\end{proposition}
\begin{proof}
(a) It follows from the Cauchy-Schwartz inequality that
$$N_{k} \|\Psi\rho_{k}\|_{L^{1}(\mu_{C_{k}})} \leq N_{k} \|\Psi\|_{L^{2}(\mu_{C_{k}})}\|\rho_{k}\|_{L^{2}(\mu_{C_{k}})}$$
for every bounded continuous functional $\Psi$ on $\mathcal{D}'(M)$. Hence, using the bounds (\ref{uestfornn}), one obtains
\begin{equation}\label{lonetoltwoe}
\|\Psi\|_{L^{1}(\mu^{C,\rho}_{k})} \leq K^{-1}_{1}K_{2}\|\Psi\|_{L^{2}(\mu_{C_{k}})}.
\end{equation}

Now a short computation employing (\ref{gauchar}) shows that
\begin{equation}\label{gaudiff}\int_{\mathcal{V}}{\bigl|e^{i\phi(f)}-  e^{i\phi(g)} \bigr|^{2}d\mu_{C_{k}}(\phi)}=
2(1-e^{-\|C^{1/2}_{k}(f-g)\|_{L^2}^{2}/2}),
\end{equation}
where $f,g \in \mathcal{D}(M)$ and $\phi(f), \phi(g)$ are the corresponding functionals on $\mathcal{D}'(M)$. Combining
(\ref{lonetoltwoe}) and (\ref{gaudiff}), one finds that
$$|S_{\mu^{C,\rho}_{k}}(I_{k}f)- S_{\mu^{C,\rho}_{k}}(I_{k}g)|\leq \text{const}\cdot (1-e^{-\|C^{1/2}_{k}I_{k}(f-g)\|_{L^2}^{2}/2}), \quad f,g \in \mathcal{D}(N),$$
from which the equicontinuity of $S_{\mu^{C,\rho}_{k}}\circ I_{k}$ follows since $C_{k}$ and $I_{k}$ are uniformly bounded and  convergence in $\mathcal{D}(N)$ implies convergence in $L^{2}(N)$.

(b) The existence of a converging subsequence of $S_{\mu^{C,\rho}_{k}}\circ I_{k}$ follows from part (a) and Theorem \ref{genaath}. The limit of such a subsequence is the characteristic functional of a probability measure  by Theorem \ref{minth}.

%(c) Since by Theorem \ref{levystheo} $\mu_{k}$ converges weakly to $\mu_{0}$, taking the limit $k \rightarrow \infty$ in (\ref{lonetoltwoe}), one finds that (\ref{onetotwoe}) holds for all bounded continuous functions on $\mathcal{D}'(M)$.
\end{proof}

\subsection{Equicontinuity and analyticity}\label{compfanalysubsec}
Recall that a functional on a complex locally convex space is called {\em (Fr\'{e}chet) analytic} if it is continuous and its restrictions to finite dimensional subspaces are analytic (see e.g. \cite[Chapter 3]{H})

\begin{theorem}\label{locbouequith} (cf. \cite[Theorem 3.1.5(c)]{H}) A uniformly locally bounded sequence of analytic functionals on a  complex locally convex space  is equicontinuous.
\end{theorem}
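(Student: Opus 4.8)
The plan is to prove equicontinuity of the family $\{\Phi_n\}$ at an arbitrary point $x_0$ of the space, reducing the desired estimate to a one complex-dimensional Cauchy inequality; since $x_0$ is arbitrary this yields equicontinuity. First I would exploit uniform local boundedness at $x_0$ to produce a neighborhood of $x_0$ and a constant $M>0$ on which all the $\Phi_n$ are bounded by $M$. Because the space is locally convex (over $\mathbb{C}$), I may shrink this neighborhood to one of the form $x_0+W$ where $W$ is a balanced convex open neighborhood of the origin, so that $|\Phi_n(y)|\le M$ for all $y\in x_0+W$ and all $n$. The use of a \emph{balanced} neighborhood is the point at which the complex structure enters: it guarantees that $\zeta w\in W$ for every $w\in W$ and every $\zeta\in\mathbb{C}$ with $|\zeta|\le 1$, i.e. a full complex disk's worth of points stays inside the region of boundedness.

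Next, fix $w\in W$ and consider the functions $g_n(\zeta)=\Phi_n(x_0+\zeta w)$ of the complex variable $\zeta$. Since each $\Phi_n$ is analytic, its restriction to the complex line $\zeta\mapsto x_0+\zeta w$ is analytic, and as the functionals are defined on the whole space, $g_n$ is in fact entire. By the balancedness just noted, $|g_n(\zeta)|\le M$ on the closed unit disk. The key step is then the classical estimate from Cauchy's integral formula: for $0<\delta<1$,
\begin{equation*}
g_n(\delta)-g_n(0)=\frac{1}{2\pi i}\oint_{|\zeta|=1}g_n(\zeta)\,\frac{\delta}{\zeta(\zeta-\delta)}\,d\zeta,
\end{equation*}
and bounding the integrand via $|g_n|\le M$, $|\zeta|=1$ and $|\zeta-\delta|\ge 1-\delta$ gives
\begin{equation*}
\bigl|\Phi_n(x_0+\delta w)-\Phi_n(x_0)\bigr|=\bigl|g_n(\delta)-g_n(0)\bigr|\le \frac{M\delta}{1-\delta},
\end{equation*}
a bound uniform in both $n$ and $w\in W$.

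Finally, given $\epsilon>0$ I would choose $\delta\in(0,1)$ with $M\delta/(1-\delta)<\epsilon$; then $V=x_0+\delta W$ is a neighborhood of $x_0$ for which $|\Phi_n(x)-\Phi_n(x_0)|<\epsilon$ holds for every $x\in V$ and every $n$. This is precisely equicontinuity at $x_0$, and since $x_0$ was arbitrary the family $\{\Phi_n\}$ is equicontinuous.

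I expect the substantive content to be the Cauchy-estimate mechanism — trading a uniform supremum bound for a uniform modulus-of-continuity bound — while the remaining points are essentially bookkeeping: securing a balanced convex neighborhood on which the uniform bound holds, and confirming that analyticity on finite dimensional subspaces makes $g_n$ holomorphic on a neighborhood of the closed unit disk so that the contour integral is legitimate. The locally convex, infinite dimensional setting intervenes only through the existence of a base of balanced convex neighborhoods of the origin; everything analytic happens on a single complex line.
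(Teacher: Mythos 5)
Your proof is correct. The paper offers no proof of this statement at all — it is quoted with the citation \cite[Theorem 3.1.5(c)]{H} — and your argument (a uniform bound on a balanced convex neighborhood, transferred to a uniform modulus of continuity via the Cauchy integral formula on complex lines) is exactly the standard argument behind that reference, so your write-up serves as a complete, self-contained substitute for the citation.
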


In what follows, we denote the complexification of a real vector space $\mathcal{F}$  by $\mathcal{F}_{\mathbb{C}}$. If $\mu$ is a measure on the dual of a real locally convex space $\mathcal{F}$ we denote the extension of its characteristic functional to $\mathcal{F}_{\mathbb{C}}$ by $S^{\mathbb{C}}_{\mu}$. This extension (which may or may not exist) is defined by setting
\begin{equation}\label{compx}\phi(f_{1}+if_{2})=\phi(f_{1})+i\phi(f_{2}), \quad  f_{1},f_{2} \in \mathcal{F}
\end{equation}
in the integral in (\ref{charde}).

If $\mu_{C}$ is a Gaussian measure on  $\mathcal{D}'(M)$ with covariance operator $C$, the extension $S^{\mathbb{C}}_{\mu_{C}}$ is analytic. Extending $C$ to $\mathcal{D}(M)_{\mathbb{C}}$, one sees that the right-hand side of (\ref{gauchar}) extends to an analytic functional on $\mathcal{D}(M)_{\mathbb{C}}$ as well. It follows that (\ref{gauchar}) continues to hold for these two extended functionals and hence one has

\begin{equation}\label{extgauseti}
|S^{\mathbb{C}}_{\mu_{C}}(f)|\leq e^{K\|C^{1/2}\|_{L^{2}}^{2}\|f\|^{2}_{L^2}}  ,\quad f \in \mathcal{D}(M)_{\mathbb{C}},
\end{equation}
where $K$ is a positive constant independent of $C$.

\subsection{On uniform integrability}\label{uniinteggrasubs}
The classical notion of uniform integrability has been generalized to sequences of measures in \cite{Se}. Let $\mu_{k}$ be a sequence of probability measures on a topological space $X$ and let $\Phi_{k}$ be a sequence of Borel measurable functions on $X$. We say that $\Phi_{k}$ is {\em uniformly $\mu_{k}$-integrable} if $\sup_{k}\int |\Phi_{k}|d \mu_{k} < \infty$ and for every $\varepsilon >0$ there exists $\delta >0$ such that for every sequence $X_{k}$ of Borel subsets of $X$ with $\sup_{k}\mu_{k}(X_{k}) < \delta $ one has $\sup_{k}\int_{X_{k}} |\Phi_{k}|d \mu_{k} < \varepsilon$. (This is one of the equivalent definitions of uniform integrability stated in \cite[Lemma 2.5]{Se}.)

We say $\Phi_{k}$ converges to a Borel measurable function $\Phi$ {\em in  $\mu_{k}$-measure} if
$$\mu_{k}\{x \in  X: | \Phi_{k}(x)-\Phi(x)|>\varepsilon \} \rightarrow 0$$ as $k \rightarrow \infty$ for every $\varepsilon >0$.
The following generalization of Vitali's integral convergence theorem to sequences of measures is a special case of \cite[Theorem 2.7]{Se}.

\begin{theorem}\label{genvitalithe} Let $ 1 \leq p < \infty$. Suppose that $\int_{X} |\Phi_{k}|^{p}d \mu_{k} < \infty$ and that $|\Phi|^{p}$ is uniformly $\mu_{k}$-integrable. Then the following statements are equivalent.

(1) $ \, \lim_{k \rightarrow \infty} \int_{X} |\Phi_{k}-\Phi|^{p}d \mu_{k}=0$.

(2) $\Phi_{k}$ converges to $\Phi$ in $\mu_{k}$-measure and $\Phi_{k}$ is uniformly $\mu_{k}$-integrable.
\end{theorem}

The following lemmas will be useful in establishing the convergence in measure and uniform integrability conditions appearing in Theorem \ref{genvitalithe}.

\begin{lemma}\label{gendomconvspcase} Let $\mu_{k}$ be a sequence of probability measures on a nuclear locally convex space $\mathcal{F}$ with equicontinous sequence of characteristic functionals. Then $\Phi_{k}$ converging to $\Phi$ pointwise implies that $\Phi_{k}$ converges to $\Phi$ in $\mu_{k}$-measure.
\end{lemma}
\begin{proof} By  Corollary \ref{unicountaddm} the measures $\mu_{k}$ are uniformly countably additive, hence the standard argument establishing that on a finite measure space pointwise convergence implies convergence in measure (see e.g. \cite[Theorem 2.2.3]{B}) implies that $\Phi_{k}$ converges to $\Phi$ in $\mu_{k}$-measure.
\end{proof}

\begin{lemma}\label{pgroneunile} Suppose that there exists $p>1$ such that
$\int_{X}  |\Phi_{k}|^{p}d \mu_{k} < K$
for all $k$ and some positive constant $K$. Then the sequence $\Phi_{k}$ is uniformly $\mu_{k}$-integrable.
\end{lemma}
\begin{proof} By H\"older's inequality one has for every Borel set $U \subset X$
\begin{equation}\label{hointulem}\int_{U} |\Phi_{k}|d \mu_{k}\leq \mu_{k}(U)^{1/q}\left(\int_{X} | \Phi_{k}|^{p}d \mu_{k}\right)^{1/p},
\end{equation}
whenever $1/p+1/q=1$. Thus given $\varepsilon >0$ we can choose $\delta=\varepsilon^{q} K^{-\frac{q}{p}}$ so that $\sup_{k}\mu_{k}(X_{k}) < \delta $ for any sequence $X_{k}$ of Borel subsets of $X$ implies by virtue of (\ref{hointulem}) that $\sup_{k}\int_{X_{k}} |\Phi_{k}|d \mu_{k} < \varepsilon$.
\end{proof}

%\{
%[8:17:33 PM] ???????: \}
%[8:18:07 \bigl\{
%[8:18:31 PM] ???????: \bigr\}
%[8:18:52 PM] ???????: \Bigl\{
%[8:19:00 PM] ???????: \Biggl\{
%[8:19:09 PM | Edited 8:19:21 PM] ???????: \left\{ ... \right.
%[8:20:24 PM] ???????: \
%[8:20:51 PM] ???????: \,
%[8:21:07 PM] ???????: \hspace{1pt}
%[8:21:31 PM] ???????: \quad
%[8:21:39 PM] ???????: \quad = \ \
%[8:21:50 PM] ???????: \qquad = \quad \quad

\end{document}